\numberwithin{equation}{section}
\newcommand{\biggpare}[1]{\biggl(#1\biggr)}
\newcommand{\norm}[1]{\| #1 \|}
\newcommand{\abs}[1]{| #1 |}
\newcommand{\bigabs}[1]{\bigl| #1 \bigr|}
\newcommand{\jap}[1]{\langle #1 \rangle}
\def\a{\alpha}
\def\b{\beta}
\def\c{\gamma}
\def\d{\delta}
\def\e{\varepsilon}
\def\f{\varphi}
\def\l{\lambda}
\def\m{\mu}
\def\n{\nu}
\def\o{\omega}
\def\r{\rho}
\def\s{\sigma}
\def\t{\tau}
\def\y{\eta}
\def\z{\zeta}
\def\th{\theta}
\def\re{\mathbb{R}}
\def\co{\mathbb{C}}
\def\na{\mathbb{N}}
\def\pa{\partial}
\renewcommand{\Re}{\text{{\rm Re}\;}}
\renewcommand{\Im}{\text{{\rm Im}\;}}
\newcommand{\dom}{\mathcal{D}}
\DeclareMathOperator*{\slim}{s-lim}
\newcommand{\Ran}{\text{\rm Ran\;}}
\newtheorem{thm}{Theorem}
\newtheorem{lem}[thm]{Lemma}
\newtheorem{prop}[thm]{Proposition}
\theoremstyle{definition}
\newtheorem{ass}{Assumption}
\theoremstyle{remark}
\newtheorem*{thm*}{Theorem}
\newtheorem*{preack}{Acknowledgment}
\newenvironment{ack}{\begin{preack}\normalfont}{\end{preack}}
\theoremstyle{remark}
\newtheorem*{rem*}{Remark}
	\title{Geometric Scattering for Schr\"odinger Operators with Asymptotically Homogeneous Potentials of Order Zero}
	\author{
		Keita M{\sc ikami}%
		\footnote{Graduate School of Mathematical Sciences, 
			University of Tokyo, 3-8-1 Komaba, Meguro Tokyo, 
			153-8914 Japan. 
			E-mail: {\tt kmikami@ms.u-tokyo.ac.jp}.  } }
\begin{document}
		\maketitle
		\begin{abstract}
			In this paper we consider Schr\"odinger operators with potentials of order zero on asymptotically conic manifolds. We prove the existence and the completeness of the wave operators with a naturally defined free Hamiltonian.
		\end{abstract}
		
		\section{Introduction}
		Let $M$ be a smooth $n$-dimensional manifold. We suppose that $M$ is a union of a relatively compact part $M_{c}$ and a non-compact part $M_{\infty}$. We assume $M_{\infty}$ is diffeomorphic to $\re_+\times \pa M$ where $\pa M$ is a smooth compact manifold. By fixing an identification map, we identify a point $x$ in $M_{\infty}$ with $(r,\th)$ in $\re_+\times \pa M$. We also assume  $M_c\cap M_\infty$ is contained in $(0,1)\times \pa M$ under the above identification. We define a reference manifold by $M_f =\re \times \pa M$.
		
		We fix a local coordinate system $(U_\a , \f_\a , (\th_j)_{j=1}^{n-1})$ for $\pa M$ and define a local coordinate system for $M_{\infty}$ by $(\re_+\times U_\a , I\otimes \f_\a , (r,(\th_j)_{j=1}^{n-1}))$.
		
		Let $H(\th)$ be a positive smooth density on $\pa M$, and $G(x)$ be a positive smooth density on $M$ such that  $G(x)=r^{n-1}H(\th)$ in $\tilde M_{\infty}=(1,\infty) \times \pa M$. 
		We set $\mathcal{H}$ by $\mathcal{H}=L^2(M,G(x)\mathrm{d}x)$ and $\mathcal{H}_f$ by $\mathcal{H}_f=L^2(M_f,H(\th)\mathrm{d}r\mathrm{d}\th)$.

		We fix a smooth cut-off function $j \in C^\infty(\re)$ such that $j(r) = 1$ if $1 \leq \bigabs r$ and $j(r) = 0$ if $\bigabs r \leq \frac12$. We define $J: \mathcal{H}_f\to	\mathcal{H}$ by 
		\[
		(J\f)(x) =\begin{cases}
		 r^{-(n-1)/2}\, j(r) \, \f(r,\th) & \text{if }x=(r,\th)\in M_\infty\\
		 0 & \text{if }x\notin M_\infty,
		\end{cases}
		\]
		 for $\f\in \mathcal{H}_f$. 
		
		Let $P$ be an elliptic second order differential operator on $M$. We assume $P$ is bounded from below and $P$ is symmetric with $\dom(P)=C^\infty_c(M)$. We also assume $P$ is of the form,
		
		\[P= -\frac12 G^{-1} (\pa_r, \pa_\th/r) G \begin{pmatrix} a_1 & a_2 \\ {}^t a_2 & a_3 \end{pmatrix}
		\begin{pmatrix} \pa_r \\ \pa_\th/r \end{pmatrix} +V
		\quad \text{on } \tilde M_\infty,
		\]
		where $\{a_{i}\}^3_{i=1}$ are real-valued and positive definite smooth tensors on $M$, and $V$ is a smooth function on $M$. We assume $\{a_{i}\}^3_{i=1}$ and $V$ satisfy Assumption A.
		
		\begin{ass}
			Let $\m_1,\m_2,\m_3>1$
			\begin{itemize}
				\item{For any $\ell\in \na$, $\a\in\na^{n-1}$, there exists $C_{\ell \a}>0$ such that 
					\begin{align*}
					&\bigabs{\pa_r^\ell\pa_\th^\a (a_1(r,\th)-1)} \leq C_{\ell\a} r^{-\m_1-\ell}, \\
					&\bigabs{\pa_r^\ell\pa_\th^\a a_2(r,\th)} \leq C_{\ell\a} r^{-\m_2-\ell}, \\
					&\bigabs{\pa_r^\ell\pa_\th^\a (a_3(r,\th)-h(\th))} \leq C_{\ell\a} r^{-\m_3-\ell}
					\end{align*}
					on $\tilde M_\infty$ where $h(\th)=\{h^{j,k}(\th) \}$ is a positive symmetric smooth (2,0)-tensor on $\pa M$.}
				\item{
					$V$ is real-valued and has decomposition 
					\begin{align*}
					V(r,\th)=\tilde{V}(\th)+V_s(r,\th) \quad \text{if }(r,\th)\in \tilde M_\infty,
					\end{align*}
					where $V_s\in C^{\infty}(M)$ is real-valued and short-range, i.e., there exists $\m_4>1$ such that for any $\ell\in \na$, $\a\in\na^{n-1}$,
					\begin{align*}
					\bigabs{\pa_r^\ell\pa_\th^\a V_s(r,\th)} \leq C_{\ell\a} r^{-\m_4-\ell} \quad \text{if }(r,\th)\in \tilde M_\infty
					\end{align*}
					with a constant $C_{\ell \a}>0$. We also assume that $\tilde V\in C^{\infty}(\pa M)$ and that the set of critical value $\mathrm{Cv}(\tilde V)= \{\l\in\re \mid \l=\tilde V(\th), \pa_{\th} \tilde V(\th)=0\}$ is finite.
				}\end{itemize}
			\end{ass}
			\begin{rem*}
				Assumption A is a slightly more strict assumption for the perturbation of $a_2$ and $a_3$ than that of \cite{in}. This is necessary since we employ the smooth perturbation theory to prove the completeness of the wave operators.
			\end{rem*}
			
			We fix a smooth function $\tilde j \in C^\infty(\re)$ such that $\tilde j$ is strictly positive and $\tilde j(r) = 1$ if $\bigabs r \leq \frac12$ and $\tilde j(r) = \frac1{r^2}$ if $\bigabs r \geq 1$. We define the reference operator $P_f$ by
			\begin{align*}
			& P_f = -\frac12\frac{\pa^2}{\pa r^2} -\frac{\tilde j(r)}2 \sum_{j,k} H(\th)^{-1}\pa_{\th_j} H(\th) h^{jk}(\th) \pa_{\th_k} 
			+\tilde V(\th) \quad\text{on } M_f,
			\end{align*}
			We note that $P_f$ is elliptic from the assumption for $\tilde j$.
			
			\begin{thm}\label{thm:1}
				(1) The wave operators $W_\pm=\slim_{t\to\pm \infty} e^{itP} J e^{-itP_f}P_{ac}(P_f)$ exist,
				where $P_{ac}(\cdot)$ is the orthogonal projection onto the absolutely continuous subspace.\newline
				(2) The wave operators $W_\pm$ are complete, i.e., $\Ran W_\pm= \mathcal{H}_{ac}(P)$, where $\mathcal{H}_{ac}(\cdot)$ denotes the absolutely continuous subspace.  
			\end{thm}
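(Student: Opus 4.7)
The plan is to prove existence by a Cook-type argument analogous to \cite{in}, and completeness by the two-Hilbert-space version of Kato's smooth perturbation theorem, as hinted at in the Remark following Assumption A.

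\emph{Existence.} Cook's method reduces existence of $W_\pm$ to the integrability of $t \mapsto \|(PJ-JP_f)e^{-itP_f}\varphi\|$ on $[0,\infty)$ (or $(-\infty,0]$) for $\varphi$ in a dense subset of $\mathcal{H}_{ac}(P_f)$. A direct computation on $\tilde M_\infty$, after conjugation by $r^{-(n-1)/2}$, shows that $PJ-JP_f$ decomposes as a sum of: (i) differential operators whose coefficients decay like $r^{-\mu_i}$ with $\mu_i>1$, coming from $a_1-1$, $a_2$, $a_3-h$ and $V_s$; (ii) terms supported in $\{\tfrac12\leq r\leq 1\}$ arising from $j'$ and from the discrepancy between $\tilde j(r)$ and $r^{-2}$ on this range; and (iii) the cutoff error outside $M_\infty$, supported on a compact set. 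Integrability in $t$ then follows from microlocal propagation estimates for $e^{-itP_f}$, localized in spectral intervals avoiding $\mathrm{Cv}(\tilde V)$, which guarantee that the state concentrates in a region where $|r|\sim |t|$.

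\emph{Completeness.} By Kato's smooth perturbation theorem in two-Hilbert-space form, if on a compact interval $I\subset\re\setminus\mathrm{Cv}(\tilde V)$ one can write $PJ-JP_f = B_1^* B_2$ with $B_1$ being $P$-smooth and $B_2$ being $P_f$-smooth on $I$, then $W_\pm E_{P_f}(I)$ exists and $\Ran(W_\pm E_{P_f}(I)) = \mathcal{H}_{ac}(P) E_P(I)$. Taking an exhaustion of $\re\setminus\mathrm{Cv}(\tilde V)$ by such intervals yields the global completeness statement, since $\mathrm{Cv}(\tilde V)$ is finite. The strengthened decay $\mu_i>1$ in Assumption A is exactly what is needed to factor each contribution of type (i) as a product of two operators each dominated by a weight $\jap{r}^{-1/2-\varepsilon}$ composed with a bounded differential operator; contributions of types (ii) and (iii) are handled by the analogous factorization with compactly supported weights. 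The smoothness of $\jap{r}^{-s}$, $s>1/2$, relative to $P$ and $P_f$ is equivalent, by Kato's lemma, to the limiting absorption principle on $I$, which I would establish by a Mourre estimate with the dilation-type conjugate operator $A=\tfrac12(rD_r+D_r r)$, suitably truncated in $r$ on $M_f$.

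\emph{Main obstacle.} The hardest step is establishing the Mourre estimate, and hence the limiting absorption principle, uniformly on compact subsets of $\re\setminus\mathrm{Cv}(\tilde V)$. For potentials of order zero, classical trajectories may linger arbitrarily long near critical points of $\tilde V$, so the positive commutator $i[P_f,A]$ loses uniform positivity as the energy approaches an element of $\mathrm{Cv}(\tilde V)$; the Mourre constants necessarily deteriorate near these thresholds, which forces a careful componentwise treatment of the energy. A secondary technical core is the transfer of the Mourre estimate from $P_f$ to $P$ through the identification $J$, where Assumption A is used to absorb the geometric perturbation as a relatively compact error and the rates $\mu_i>1$ supply the short-range control needed to preserve smoothness under the comparison.
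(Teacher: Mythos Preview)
Your overall architecture---limiting absorption via Mourre, then Kato smoothness to run the two--Hilbert--space comparison---matches the paper, and your identification of the decomposition of $PJ-JP_f$ into long--range decaying pieces plus compactly supported errors is correct. However, there is one genuine gap and one methodological difference worth noting.

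\textbf{The gap: choice of conjugate operator.} You propose the pure radial dilation $A=\tfrac12(rD_r+D_r r)$. This does not yield a Mourre estimate for an order--zero potential at energies lying in the regular range of $\tilde V$. Indeed $i[\tilde V(\th),A]=0$, so $i[P,A]$ is essentially (twice) the kinetic part $P_0=P-\tilde V$. On the region of $\pa M$ where $\tilde V(\th)$ is close to the spectral parameter $E$, the localized kinetic energy $\chi_I(P)\,f\,P_0\,f\,\chi_I(P)$ is only nonnegative, with no uniform lower bound $\a>0$; classically this corresponds to trajectories that sit on the level set $\{\tilde V=E\}$ with vanishing radial momentum. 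The paper (following Agmon--Cruz--Herbst) uses instead
\[
A_\l \;=\; \tfrac{1}{2i}\Bigl(j(r)r\pa_r+\pa_r j(r)r\Bigr)\;-\;\tfrac{1}{\l}\,j(r)\sum_{j,k}(\pa_{\th_j}\tilde V)\,h^{jk}\pa_{\th_k}\;+\;\text{(l.o.t.)},
\]
whose tangential piece produces the extra term $\tfrac1\l\sum_{j,k}(\pa_{\th_j}\tilde V)h^{jk}(\pa_{\th_k}\tilde V)$ in the commutator. That term is strictly positive precisely on the region $\{\tilde V\approx E\}$ (since $E\notin\mathrm{Cv}(\tilde V)$), which is exactly where your $A$ fails; the remaining regions are handled by sign/compactness arguments for $P_0$. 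Without this modification your limiting absorption principle, and hence both the $P$-- and $P_f$--smoothness of $\jap r^{-s}$, would only be available outside the range of $\tilde V$, which is not enough.

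\textbf{A methodological difference.} For existence you invoke Cook's method together with microlocal propagation (``$|r|\sim|t|$'') for $e^{-itP_f}$. The paper does not do this: it obtains existence and completeness \emph{simultaneously} from the same Kato--smoothness factorization. Concretely, after inserting cutoffs $\chi(P)$, $\chi(P_f)$ and using that $\chi(P)J-J\chi(P_f)$ is compact, one differentiates $e^{itP}\chi(P)J\chi(P_f)e^{-itP_f}$ and writes the derivative as $A_1^*A_2$ with $A_1=\jap r^{s}(P_fJ^*-J^*P)\chi(P)$ and $A_2=\jap r^{-s}\chi(P_f)$; the LAP then gives $P$--smoothness of $A_1$ and $P_f$--smoothness of $A_2$, and Cauchy--Schwarz in $t$ finishes both directions at once. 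Your Cook argument could in principle be made to work, but the required propagation estimate for $e^{-itP_f}$ with an order--zero potential is itself a nontrivial input (essentially a minimal--velocity bound derived from the same Mourre machinery), so it does not buy any economy over the paper's route.
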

			
			The study of the Schr\"odinger operators with potentials of order 0 was initiated by Herbst in \cite{h}, who characterized unitary equivalence of such operators and studied asymptotic behavior of time evolution in the Euclidean space case. He also investigated classical mechanics of Hamiltonian flow with potentials of order 0. In \cite{hs2}, Herbst and Skibsted compared Schr\"odinger operators with potentials of order 0 with Laplacian and proved the existence and asymptotic completeness of wave operators in high and low energy. 
			
			Agmon, Cruz, and Herbst considered a class of Schr\"odinger operators including Schr\"odinger operators with potentials of order 0. They showed Schr\"odinger operators with potentials of order 0 can be diagonalized in high energy by solving the eikonal equation in \cite{ach}(see also Sait$\mathrm{\bar{o}}$ \cite{ys}). They also showed that Schr\"odinger operators with potentials of order 0 satisfy the Mourre estimate with the modified conjugate operators.	
			
			One interesting property of the Schr\"odinger operators with potentials of order 0 is so called ``localization of the solution in direction'' which is proved by Herbst and Skibsted in \cite{hs2}. Let $P$ be a Schr\"odinger operator with potentials of order 0 on $\re^n$ and the variable $\th$ be in $S^{n-1}$. We define $\mathcal{H}_{\th} \subset L^2(\re^n)$ by $\{\f\in L^2(\re^n) \mid (\frac{x}{\abs{x}}-\th)e^{-itP}\f\rightarrow 0 \text{ as t} \rightarrow \infty\}$. Then it is shown that there exists $\{\th_m\}^{M}_{m=1}$ such that $L^2(\re^n) = \oplus^M_{m=1} \mathcal{H}_{\th_m}$.
			
			On the other hand, scattering theory for the Schr\"odinger operators with potentials of order 0 on the asymptotically conic manifolds is studied by Hassel, Melrose and Vasy in \cite{hmv} and \cite{hmv2} under the setting of \cite{mel}. They showed the asymptotic completeness and the localization of the solution in direction. 
			
			There are several formulation of geometric scattering and we employ the formulation of \cite{in}. A key idea of the formulation in \cite{in} is that they compare Schr\"odinger operators on the asymptotically conic manifolds with the simpler Schr\"odinger operators on the asymptotically tubic manifolds. In \cite{in2}, it is proved that scattering matrix of Schr\"odinger operators with short-range potential can be written as a Fourier integral operator associated with the asymptotic classical flow in the phase space.
			
			Scattering theory we propose is not a generalization of the results for the Euclidean case since we compare asymptotically the operators with the same leading asymptotic terms on different manifolds. We note that spectral properties presented in Section 2 are straight forward generalizations of previous results in \cite{ach} and \cite{h}. Our model is different from that of \cite{hmv} and \cite{hmv2}. They only treat the case when the potential is the Morse function and	they employ more strict assumption for the perturbation of $P$.
			
			\begin{ack}
				The author is grateful to his supervisor Professor Shu Nakamura for lots of discussions and parseverant supports and encouragement for this paper.	The author is under the support of the FMSP(Frontiers of Mathematics Science and Physics) program at the Graduate School of Mathematical Sciences, The University of Tokyo.
			\end{ack}
			
			\section{Spectral properties of $P$ and $P_f$}
			
			\subsection{Preparation for the proof of the Mourre estimate.}
			From Corollary 4 of \cite{in}, we learn $P$ is self-adjoint. We can prove that $P_f$ is self-adjoint similarly to the proof of Proposition 3 in \cite{in} since $P_f$ is elliptic.
			
			We define a smooth vector field $X_\l$ on $M$ for $\l>0$ by 
			\begin{align*}
			& X_\l=
			\begin{cases}
			 j({r})\, r\, \frac{\pa}{\pa r} -\sum_{j,k} \frac1\l j(r)(\pa_{\th_j}\tilde V(\th) ) h^{jk}(\th)\pa _{\th_k}  & \text{ on } M_\infty\\
			 0 & \text{ on } M_c.
			\end{cases}			
			\end{align*}
			Then $X_\l$ generates flow $\mathrm{exp}(-tX_\l)$ for $t \in \re$. We define the unitary group $U_\l(t)$ on $\mathcal{H}$
			as follows:
			\begin{align*}
			& U_\l(t)\f(x) = \Phi(t,x) \f(\mathrm{exp}(-tX_\l)x),
			\end{align*}
			where $\Phi(t,x)$ is a positive and smooth weight function to make $U_\l(t)$ unitary.
			Since $U_\l(t)$ is unitary, it can be written as $U_\l(t)=e^{itA_\l}$ where $A_\l$ is a self-adjoint operator. From direct calculation, we can write $A_\l$ as follows on $C^\infty_c(M)$:
			\begin{align*}
			& A_\l=\frac{1}{2i} G(r,\th)^{-\frac12}\biggpare{j({r})\, r\, \frac{\pa}{\pa r}+ \frac{\pa}{\pa r} j(r)\, r-\sum_{j,k} \frac1\l j(r)(\pa_{\th_j}\tilde V(\th) )h^{jk}(\th)\pa _{\th_k}\\
				& \quad -\sum_{j,k}\frac1\l \pa_{\th_j}h^{jk}(\th) (\pa _{\th_k}\tilde V(\th))j(r)}G(r,\th)^{\frac12}\\
			& = \frac{1}{2i} \biggpare{j({r})\, r\, \frac{\pa}{\pa r}+ \frac{\pa}{\pa r} j(r)\, r +(n-1)j(r)- \sum_{j,k} \frac1\l j(r)(\pa_{\th_j}\tilde V(\th) )h^{jk}(\th)\pa _{\th_k}\\
				& \quad -\sum_{j,k}\frac1\l \pa_{\th_j}h^{jk}(\th) (\pa _{\th_k}\tilde V(\th))j(r) -\sum_{j,k}\frac1\l (\pa_{\th_j}H(\th))h^{jk}(\th) (\pa _{\th_k}\tilde V(\th))j(r)}.
			\end{align*}
			By the similar argument, we define a self-adjoint operator $A_f$ on $\mathcal{H}_f$ as follows:
			
			\begin{align*}
			& A_f=\frac{1}{2i} H(\th)^{-\frac12}\biggpare{j({r})\, r\, \frac{\pa}{\pa r}+ \frac{\pa}{\pa r} j(r)\, r-\sum_{j,k} \frac1\l j(r)(\pa_{\th_j}\tilde V(\th) )h^{jk}(\th)\pa _{\th_k}\\
				& \quad -\sum_{j,k}\frac1\l \pa_{\th_j}h^{jk}(\th) (\pa _{\th_k}\tilde V(\th))j(r)}H(\th)^{\frac12}\\
			& = \frac{1}{2i} \biggpare{j({r})\, r\, \frac{\pa}{\pa r}+ \frac{\pa}{\pa r} j(r)\, r - \sum_{j,k} \frac1\l j(r)(\pa_{\th_j}\tilde V(\th) )h^{jk}(\th)\pa _{\th_k}\\
				& \quad -\sum_{j,k}\frac1\l \pa_{\th_j}h^{jk}(\th) (\pa _{\th_k}\tilde V(\th))j(r) -\sum_{j,k}\frac1\l (\pa_{\th_j}H(\th))h^{jk}(\th) (\pa _{\th_k}\tilde V(\th))j(r)}.
			\end{align*}
			
			The main part of this section is to prove the following theorem.
			\begin{thm}\label{thm:2}
				Let $P$ satisfies Assumption A and we assume $\l$ is sufficiently large. 
				Then \newline
				(1) The point spectrum of $P$ is discrete in $\re$. \newline
				(2) Let $I\Subset \re\setminus (\s_{pp}(P) \bigcup \mathrm{Cv}(\tilde V))$ and let $s>1/2$. Then 
				\[
				\sup_{\Re \! z\in I, \Im \! z\neq 0} \norm{\jap{A_\l}^{-s} (P-z)^{-1} \jap{A_\l}^{-s}} <\infty.
				\]
				In particular, $\s_{sc}(P)=\emptyset$ and $\jap{A_\l}^{-s} (P-z)^{-1} \jap{A_\l}^{-s}$ is H\"older continuous in $z$ if $\Re \! z\in I$ and $\Im \! z\neq 0$. \newline
				(3) Let $I$ be as above, and let $s>1/2$. Then 
				\[
				\sup_{\Re \! z\in I, \Im \! z\neq 0} \norm{\jap{r}^{-s} (P-z)^{-1} \jap{r}^{-s}} <\infty.
				\]
				In particular, $\jap{r}^{-s} (P-z)^{-1} \jap{r}^{-s}$ is H\"older continuous in $z$ if $\Re \! z\in I$ and $\Im \! z\neq 0$. Thus
				\[
				\jap{r}^{-s} (P-E\pm i0)^{-1}\jap{r}^{-s} 
				=\lim_{\e\to+0}\jap{r}^{-s} (P-E\pm i\e)^{-1}\jap{r}^{-s}
				\]
				converge uniformly in $E\in I$. 
			\end{thm}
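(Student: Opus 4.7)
The plan is to establish a strict Mourre estimate for $P$ with conjugate operator $A_\l$ and then apply the standard abstract machinery (Mourre, Jensen--Mourre--Perry, Amrein--Boutet~de~Monvel--Georgescu) to derive (1) and (2); part (3) will follow by transferring weights from $A_\l$ to $r$ using spectral localization.

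First I would compute $[iP, A_\l]$ on $C_c^\infty(M)$. On the end $\tilde M_\infty$, the radial piece $j(r)r\pa_r + \pa_r j(r)r$ of $A_\l$ is, modulo lower order, the generator of dilations, so its commutator with the radial kinetic energy $-\frac12\pa_r^2$ contributes twice that kinetic energy, while its commutator with the angular kinetic energy produces $O(r^{-2})$ terms that are relatively compact on spectral windows of $P$. The angular drift piece $-\l^{-1}j(r)(\pa_{\th_j}\tilde V)h^{jk}\pa_{\th_k}$ commuted with $\tilde V(\th)$ yields the nonnegative symbol $\l^{-1}j(r)\bigabs{d\tilde V}_h^2(\th)$, and its commutator with the angular kinetic energy is short-range. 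The remaining contributions from $a_1-1$, $a_2$, $a_3-h$, and $V_s$ decay with rates $\m_i>1$, hence are relatively compact on $1_I(P)$. To obtain strict positivity, I would introduce a partition of unity $\chi_1^2+\chi_2^2=1$ on $\pa M$ with $\chi_1$ supported near the critical set of $\tilde V$: on $\supp \chi_2$, $\bigabs{d\tilde V}_h^2\ge c>0$, giving a contribution bounded below by $c/\l$; on $\supp \chi_1$, $\tilde V$ takes values close to a critical value, which by assumption lies outside $I$, so $1_I(P)\chi_1(\th)$ localizes to strictly positive radial kinetic energy (controlled via the Helffer--Sj\"ostrand calculus applied to $P-\tilde V(\th)$). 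Combining the two regions and choosing $\l$ large enough to absorb cross terms and error terms yields
\[
1_I(P)\,[iP,A_\l]\,1_I(P) \ge c\,1_I(P) + K,\quad c>0,\ K\text{ compact}.
\]
Together with $C^2(A_\l)$ regularity of $P$, which follows from the explicit form of $A_\l$, the smoothness of all ingredients, and the decay bounds in Assumption A applied to iterated commutators, the abstract Mourre theorem yields part (1) via the virial theorem (no accumulation of eigenvalues in $I$) and part (2) with H\"older continuity.

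For part (3), I would transfer from $\jap{A_\l}^{-s}$ to $\jap{r}^{-s}$ weights using spectral localization. Angular derivatives are $P$-bounded on $\Ran 1_I(P)$, so the angular drift part of $A_\l$ is bounded on that range, and $A_\l f(P)$ coincides with $\frac{1}{2i}(j(r)r\pa_r + \pa_r j(r)r)f(P)$ up to a bounded operator for $f\in C^\infty_c(I)$. A standard commutator expansion via the Helffer--Sj\"ostrand formula for $\jap{A_\l}^s$ then shows that $\jap{A_\l}^s\jap{r}^{-s}f(P)$ extends to a bounded operator; inserted together with its adjoint into $\jap{r}^{-s}(P-z)^{-1}\jap{r}^{-s}$ with a cut-off $\tilde f\prec f$, this reduces (3) to (2). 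The main obstacle is the strict Mourre bound itself: the exclusion $I\cap \mathrm{Cv}(\tilde V)=\emptyset$ is essential precisely because, at a critical value of $\tilde V$, both sources of positivity (the angular gradient and the radial kinetic energy) can simultaneously degenerate, and the quantitative trade-off in the $\chi_1$/$\chi_2$ decomposition against the small parameter $\l^{-1}$ is the technically delicate part of the argument.
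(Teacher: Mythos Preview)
Your overall strategy---establish a Mourre estimate with conjugate operator $A_\l$, verify $C^2(A_\l)$ regularity, invoke the abstract limiting absorption principle for (1)--(2), and then transfer weights from $\jap{A_\l}$ to $\jap{r}$ for (3)---is exactly the route taken in the paper. The paper's weight transfer is done through the resolvent identity $(P-z)^{-1}=(P+i)^{-1}+(z+i)(P+i)^{-2}+(z+i)^2(P+i)^{-1}(P-z)^{-1}(P+i)^{-1}$ together with the boundedness of $\jap{A_\l}^s(P\pm i)^{-1}\jap{r}^{-s}$ (proved by interpolation between $s=0$ and $s=1$); your variant via spectral cutoffs $f(P)$ and Helffer--Sj\"ostrand expansion is a legitimate alternative.

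There is, however, a genuine gap in your Mourre argument. On $\supp\chi_1$ you claim that $1_I(P)\chi_1(\th)$ ``localizes to strictly positive radial kinetic energy'' because the nearby critical value of $\tilde V$ lies outside $I$. This is only correct when that critical value lies \emph{below} $I$: then $P_0=P-\tilde V(\th)$ is indeed bounded below by a positive constant after energy localization. When the critical value lies \emph{above} $I$, the same bookkeeping gives $\chi(P)\chi_1 P_0 \chi_1\chi(P)\le -c\,\chi(P)\chi_1^2\chi(P)+K$, which yields no positivity; combined with $P_0\ge 0$ this instead squeezes $\chi(P)\chi_1 P_0\chi_1\chi(P)$ between $0$ and a compact operator, forcing it to be compact and hence absorbable into the remainder. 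The paper handles precisely this dichotomy by using a \emph{three}-piece partition $\{f_1,f_2,f_3\}$ of $\pa M$ according to the value of $\tilde V$ relative to the interval $I=(\n-\t,\n+\t)$: on $\supp f_1=\{\tilde V\in(\n-\t-\y,\n+\t+\y)\}$ there are no critical values, so $\bigabs{d\tilde V}_h^2\ge\d$; on $\supp f_2=\{\tilde V<\n-\t-\y/2\}$ one gets $P_0\ge \y/2$; and on $\supp f_3=\{\tilde V>\n+\t+\y/2\}$ the contribution is shown to be compact by the squeeze above. Your two-piece decomposition (near/far from the critical \emph{set}) can be repaired along these lines, but as written it misses the ``critical value above $I$'' case.
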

			
			We can prove the same theorem for $P_f$, i.e.,
			
			\begin{prop}
				Theorem 2 holds for $P_f$ with $A_\l$ replaced by $A_f$.
			\end{prop}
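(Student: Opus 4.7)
The plan is to repeat, line by line, the proof of Theorem~\ref{thm:2} with $P$ replaced by $P_f$ and $A_\l$ replaced by $A_f$, exploiting the fact that $P_f$ is structurally simpler: it has exact product form in $(r,\th)$ outside a compact set, so the perturbative tensors $a_1-1$, $a_2$, $a_3-h$ and the short-range piece $V_s$ that complicate the argument for $P$ are simply absent.

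The heart of the matter is a Mourre estimate for the pair $(P_f,A_f)$. First I would check that $P_f\in C^2(A_f)$, which reduces to verifying that the commutators $[P_f,A_f]$ and $[[P_f,A_f],A_f]$ extend to bounded operators from $\dom(P_f)$ to $\dom(P_f)^*$; both expressions are easily computed from the explicit formulas for $P_f$ and $A_f$, since every coefficient is smooth, bounded in $r$, and compactly supported in the cutoff region when derivatives fall on $j(r)$. Next I would compute $i[P_f,A_f]$ on $C^\infty_c(M_f)$. The commutator of $-\frac12\partial_r^2$ with the radial generator $\frac{1}{2i}(j(r)r\partial_r+\partial_r\,j(r)r)$ produces, modulo terms supported on $\{|r|\le 1\}$, the operator $-\partial_r^2$, i.e.\ twice the radial kinetic energy in $P_f$. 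The commutator of the angular part of $P_f$ with the $\tfrac1\l$-term in $A_f$, together with $[\tilde V(\th),\,-\tfrac1\l j(r)h^{jk}(\pa_{\th_j}\tilde V)\pa_{\th_k}]$, yields the nonnegative quantity
\[
\frac{1}{\l}\,j(r)\sum_{j,k} h^{jk}(\th)\,(\pa_{\th_j}\tilde V)(\pa_{\th_k}\tilde V)
\]
plus remainders controlled by $\tilde j(r)$ and by derivatives of $\tilde V$ of order $\ge 2$. For $I\Subset\re\setminus(\s_{pp}(P_f)\cup\mathrm{Cv}(\tilde V))$, $|\nabla_\th\tilde V|^2_h$ is bounded below on the set $\{\tilde V(\th)\in I\}$, and a standard localization in $P_f$ (splitting the radial and angular kinetic energies) combined with taking $\l$ large gives
\[
E_I(P_f)\,i[P_f,A_f]\,E_I(P_f)\ge c\,E_I(P_f)+K,
\]
with $c>0$ and $K$ compact, exactly as in the proof for $P$.

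Once the Mourre estimate plus $C^2$-regularity is in hand, (1) and (2) are immediate: discreteness of $\s_{pp}(P_f)$ in $\re$ comes from the Virial theorem applied to eigenvalues accumulating in the bounded Mourre region, and the weighted LAP with $\jap{A_f}^{-s}$, $s>1/2$, together with H\"older continuity in $z$, is the standard Mourre--Jensen--Perry conclusion. For (3), I would deduce the $\jap{r}^{-s}$ LAP from the $\jap{A_f}^{-s}$ LAP by the same comparison argument used for $P$: on energy-localized states the weight $\jap{r}^{-s}$ is dominated by $\jap{A_f}^{-s}$ (since $A_f$ contains $\frac{1}{i}j(r)r\pa_r$, so $r$ is, up to bounded error, a generator inside $A_f$ when commuted against $\pa_r$), and this transfer is in fact easier here than for $P$ because there are no error terms from $a_1,a_2,a_3$.

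The main obstacle I anticipate is not in the Mourre scheme itself but in making the remainder estimates from the cutoff region $\{|r|\le 1\}$ and from the angular factor $\tilde j(r)$ relatively compact with respect to $P_f$. On the tubic manifold $M_f=\re\times\pa M$ the direction $r\to-\infty$ is also present, and one must verify that the same estimates on $|r|\ge 1$ work symmetrically in the sign of $r$; since $j$ and $\tilde j$ are even in $r$ and $\tilde V$ is $r$-independent, this is straightforward, but it is the only point where the proof genuinely differs from the one for $P$ and requires a separate check.
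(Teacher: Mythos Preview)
Your proposal is correct and matches the paper's approach exactly: the paper simply remarks that ``the proof of Proposition~3 is exactly the same with that of Theorem~2,'' and you propose to carry this out line by line with the obvious substitutions $P\to P_f$, $A_\l\to A_f$, noting that the structure is in fact simpler because the perturbative tensors and $V_s$ are absent. Your observation that the two-sided $r$-axis on $M_f=\re\times\pa M$ requires checking the estimates symmetrically in the sign of $r$ is a genuine point the paper glosses over, but as you note it is handled by the evenness of $j$ and $\tilde j$.
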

			
			From the Mourre theorem in \cite{ya}, it suffices to prove following Theorem.
			\begin{thm}\label{prop:1}
				Let $P$ and $A_\l$ be as above and we assume $\l$ is sufficiently large. For each interval $I\Subset \re\setminus (\s_{pp}(P) \bigcup \mathrm{Cv}(\tilde V))$, there exist $\a >0$ and a compact operator $K$ such that
				\[
				\chi_I(P) i[P,A_\l]\chi_I(P) \geq \a \chi_I(P) + K, 
				\]
				where $\chi_I$ is the indicator function of $I$. 
			\end{thm}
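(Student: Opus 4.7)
The plan is to execute the standard Mourre commutator estimate. A direct calculation of $i[P,A_\lambda]$ on the asymptotic region $\tilde M_\infty$ shows that the leading virial-type commutator between the radial dilation part $\tfrac1{2i}(jr\pa_r+\pa_r jr)$ of $A_\lambda$ and the asymptotic principal part $-\tfrac12\pa_r^2-\tfrac1{2r^2}\Delta_h$ of $P$ produces $2P_0$ with $P_0:=-\tfrac12\pa_r^2-\tfrac1{2r^2}\Delta_h$, while the commutator of the angular drift portion of $A_\lambda$ with $\tilde V(\theta)$ produces $\tfrac{j(r)}{\lambda}\bigabs{\nabla_h\tilde V(\theta)}^2$. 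All remaining contributions---the perturbations $a_i-a_i^{\infty}$, $V_s$, boundary terms near $r\sim 1$, and the $O(r^{-2})$ cross-commutator between the angular Laplacian and the angular drift---are short-range (decaying like $r^{-\min(\mu_1,\mu_2,\mu_3,\mu_4,2)}$). Thus
\[
i[P,A_\lambda]=2P_0+\frac{j(r)}{\lambda}\bigabs{\nabla_h\tilde V(\theta)}^2+R,
\]
with $\chi_I(P)R\chi_I(P)$ compact by the short-range property; it suffices to lower-bound the sandwiched main part.

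On the classical energy shell $P\approx E\in I$, using $P_0=P-\tilde V-V_s$ on $\tilde M_\infty$, the relevant symbol becomes
\[
m(\theta,E):=2(E-\tilde V(\theta))+\tfrac1\lambda\bigabs{\nabla_h\tilde V(\theta)}^2.
\]
This is nonnegative on the classically allowed set $\{\theta:\tilde V(\theta)\leq E\}$ and vanishes only when $\tilde V(\theta)=E$ and $\nabla_h\tilde V(\theta)=0$ simultaneously---i.e.\ $E\in\mathrm{Cv}(\tilde V)$, which is ruled out by hypothesis. Compactness of $\partial M$ together with continuity in $E\in I$ then yields a uniform lower bound $m(\theta,E)\geq\alpha>0$ (with $\alpha$ possibly depending on $\lambda$) on some neighborhood of the allowed region.

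To convert this pointwise bound into an operator inequality, I introduce a smooth angular partition of unity $\eta_-^2+\eta_+^2=1$ on $\partial M$ with $\eta_-$ supported where $\tilde V<\sup I+\delta$ and $\eta_+$ supported where $\tilde V>\sup I+\delta/2$, choosing $\delta>0$ small enough that a $\delta$-neighborhood of $I$ avoids $\mathrm{Cv}(\tilde V)$. On $\mathrm{supp}\,\eta_-$ the pointwise bound combined with commutator manipulations (moving $\chi_I(P)$ across $\eta_-$ up to compact errors from $[\chi_I(P),\eta_-]$) produces the desired $\alpha\,\chi_I(P)\eta_-^2\chi_I(P)$ contribution. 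For the $\eta_+$ piece the key claim is that $\chi_I(P)\eta_+^2\chi_I(P)$ is compact: since $P_0\geq0$ and $\tilde V(\theta)-E>\delta/4$ on $\mathrm{supp}\,\eta_+$ for all $E\in I$, a standard Agmon/IMS argument applied to the fibered $1$D operator $-\tfrac12\pa_r^2+(\tilde V(\theta)-E)$ (of which the angular Laplacian is an $O(r^{-2})$ perturbation) shows that functions in $\Ran\chi_I(P)$ cannot extend to spatial infinity in the classically forbidden angular directions. Summing via $\chi_I(P)^2=\chi_I(P)(\eta_-^2+\eta_+^2)\chi_I(P)$ yields the Mourre estimate.

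The main technical obstacle is the Agmon/IMS compactness of $\chi_I(P)\eta_+^2\chi_I(P)$, which requires careful handling of the angular coupling and of commutators between $\theta$-cutoffs and the $P$-functional calculus. The hypothesis that $\lambda$ be sufficiently large enters not in the positivity argument above but at the prerequisite level of Mourre theory (ensuring, e.g., $P\in C^2(A_\lambda)$), by making the angular drift portion of $A_\lambda$ a small bounded perturbation of the pure radial dilation generator.
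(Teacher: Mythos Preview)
Your overall scheme---compute $i[P,A_\l]$ modulo compact terms as a positive kinetic piece plus $\tfrac{j}{\l}|\nabla_h\tilde V|^2$, then localize angularly---is the same as the paper's. However there are two genuine gaps.

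\textbf{The angular--angular cross term is not short-range.} The commutator of the angular drift $\tfrac{j(r)}{\l}(\pa_\th\tilde V)h\pa_\th$ with $-\tfrac{1}{2r^2}\Delta_h$ has Poisson-bracket symbol of the form $\tfrac{j(r)}{\l r^2}Q(\th,\o)$ with $Q$ \emph{quadratic} in $\o$. This is of the same order as the angular kinetic part of $p_0$, hence bounded but \emph{not} compact after sandwiching with $\chi_I(P)$. The paper (Lemma~7) writes $\{p-V,a_\l\}=2(p-V)-\tfrac1\l(\r,\o/r)B\,{}^t(\r,\o/r)+(\text{compact symbol})$ with $B=O(1)$, and then takes $\l$ large so that the $B$-term is dominated by the positive kinetic form, yielding only the inequality $i[P,A_\l]\geq P_0+\tfrac{j}{\l}|\nabla_h\tilde V|^2+E$. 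This is exactly where ``$\l$ sufficiently large'' enters the positivity argument, contrary to your last paragraph.

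\textbf{A two-piece angular partition is not enough.} Your $\y_-$ is supported where $\tilde V<\sup I+\d$. For $E$ near $\inf I$ this support contains angles with $\tilde V(\th)\in(\inf I,\sup I+\d)$, on which $m(\th,E)=2(E-\tilde V(\th))+\tfrac1\l|\nabla_h\tilde V|^2$ can be strictly negative (the first term is $\lesssim -|I|$ while the second is $O(1/\l)$). Nor can you fall back on $|\nabla_h\tilde V|^2\geq\d$ throughout $\supp\y_-$, since critical points with critical value below $\inf I$ may lie there. The paper uses a \emph{three}-piece partition adapted to $I=(\n-\t,\n+\t)$: on $\supp f_1=\{\tilde V\in(\n-\t-\y,\n+\t+\y)\}$ one has $|\nabla_h\tilde V|^2\geq\d>0$ since $I'\cap\mathrm{Cv}(\tilde V)=\emptyset$; on $\supp f_2=\{\tilde V<\n-\t-\y/2\}$ the identity $P_0=P-\tilde V+(\text{compact})$ and $\chi(P)(P-\n)\chi(P)\geq -\t\chi(P)^2$ give $\chi f_2 P_0 f_2\chi\geq\tfrac\y2\chi f_2^2\chi+K$; the forbidden region $\supp f_3=\{\tilde V>\n+\t+\y/2\}$ is then isolated cleanly.

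Finally, for the forbidden region you assert that $\chi_I(P)\y_+^2\chi_I(P)$ is compact via Agmon/IMS. This is stronger than needed and your justification (``functions in $\Ran\chi_I(P)$ cannot extend to spatial infinity in the classically forbidden angular directions'') is not an Agmon estimate in the usual sense---Agmon bounds apply to $L^2$ eigenfunctions, not to arbitrary states in the range of a spectral cutoff. The paper avoids this entirely: using $P_0=P-\tilde V$ and $\tilde V>\n+\t+\y/2$ on $\supp f_3$ together with $\chi(P)(P-\n)\chi(P)\leq\t\chi(P)^2$ gives $\chi f_3(P_0+\tfrac\y2)f_3\chi\leq K_8$; combined with $P_0\geq 0$ modulo compacts this squeezes $\chi f_3(P_0+\tfrac\y2)f_3\chi$ between compact operators, hence it is itself compact, and the desired lower bound follows.
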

			
			Let $m$ be real number. We consider a symbol class $S^m(T^*M_\infty)$ such that $a \in S^m(T^*M_\infty)$ means $a \in C^{\infty}(T^*M_\infty) = C^{\infty}(T^*\re_+ \times T^*\pa M)$ and for each multi-index $\a,\b,\c,\d$ there exists $C_{\a,\b,\c,\d}>0$ with
			\begin{align*}
			& \bigabs{\pa_r^\a\pa_\rho^\b\pa_\th^\c\pa_\o^\d a(r,\rho,\th,\o)} \leq 
			C_{\a,\b,\c,\d} \jap{r}^{m-|\a|}\jap{\rho}^{m-|\c|}\jap{\o}^{m-|\d|} .
			\end{align*}
			
			We quantize $a \in S^m(T^*M_\infty)$ by following \cite{in2}. Let $\{\f_\a, U_\a\}$ be a local coordinate system of $\pa M$ and $\{\chi^2_\a\}$ be a partition of unity on $\pa M$ compatible with our coordinate system, i.e., $\chi_\a\in C_0^\infty(U_\a)$ and $\sum_{\a} \chi_\a(\th)^2 \equiv 1$ on $\pa M$. We also denote $\tilde \chi_\a(r,\th)= \chi_\a(\th) j(r)\in C^\infty(M_\infty)$. Let	$u\in C_0^\infty(T^*M)$. We denote by $a_{(\a)}$ and $G_{(\a)}$ the representation of $a$ and $G$ in the local coordinate 
			$(1\otimes \f_\a,\re\times U_\a)$, respectively. 
			We quantize $a$ by 
			\begin{align*}
			Op^W(a)u = \sum_\a \tilde \chi_\a G_{(\a)}^{-1/2} a^W_{(\a)}(r, D_r, \th, D_\th) G_{(\a)}^{1/2}
			\tilde \chi_\a u
			\end{align*}
			
			From general theory of the pseudodifferential operators in \cite{zws}, we directly obtain following Proposition:
			\begin{prop}
				Let $a \in S^m(T^*M_\infty)$, then $Op^W(a)$ is a compact operator on $\mathcal{H}$ if $m<0$.
			\end{prop}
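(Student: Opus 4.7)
The plan is to reduce the problem to the classical compactness result for Weyl pseudodifferential operators on $\re^n$ of negative order, which is the content of the general theory invoked from \cite{zws}.

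First, expanding $Op^W(a)$ via the partition of unity gives a finite sum (since $\pa M$ is compact) of terms
\[
\tilde\chi_\alpha G_{(\alpha)}^{-1/2}\, a^W_{(\alpha)}(r, D_r, \theta, D_\theta)\, G_{(\alpha)}^{1/2} \tilde\chi_\alpha.
\]
Since the compact operators form an ideal, it suffices to show each summand is compact on $\mathcal{H}$. The cutoffs $\tilde\chi_\alpha$ and the densities $G_{(\alpha)}^{\pm 1/2}$ are bounded multiplications, and conjugation by $G_{(\alpha)}^{1/2}$ is precisely the unitary intertwiner between $L^2(M,G\,dx)$ and the flat local $L^2$ on the support of $\tilde\chi_\alpha$, so the question reduces to compactness of $a^W_{(\alpha)}$ acting on flat $L^2$ of the local chart.

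Next, because $\chi_\alpha$ has compact support in $U_\alpha$, the variable $\theta$ is effectively restricted to a compact set in $\re^{n-1}$, and the remaining decay of $a_{(\alpha)}$ in $(r,\rho,\omega)$ is of order $\jap{r}^m\jap{\rho}^m\jap{\omega}^m$ with $m<0$. I would then prove compactness by an approximation argument: cut off in phase space with $\kappa(r/N,\rho/N,\omega/N)$, where $\kappa\in C_c^\infty$ equals $1$ near the origin, to produce a symbol in $C_c^\infty(T^*\re^n)$ whose Weyl quantization is smoothing and in particular compact. The tail $a_{(\alpha)}\bigpare{1-\kappa(\cdot/N)}$ has symbol seminorms of order $N^m \to 0$, so by the Calder\'on--Vaillancourt theorem it tends to zero in operator norm. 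Since compact operators are norm-closed, $a^W_{(\alpha)}$ is compact, and a finite sum of compact operators is compact.

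The proof is essentially routine once the general pseudodifferential calculus in \cite{zws} is invoked; the main point requiring attention is the bookkeeping between the weighted Hilbert space $\mathcal{H}$ and the flat $L^2(\re^n)$ used for the local Weyl quantization, which is exactly what the $G^{1/2}$-conjugation in the definition of $Op^W$ is designed to handle. A secondary nuance is that the stated symbol class treats $\pa_r$ and $\pa_\th$ somewhat asymmetrically with respect to $\pa_\rho$ and $\pa_\o$, but for compactness only the base decay $m<0$ is used, and this subtlety plays no role.
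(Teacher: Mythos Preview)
Your argument is correct and is exactly the standard route the paper is gesturing at: in the paper this proposition is stated without proof, with only the remark that it follows ``from general theory of the pseudodifferential operators in \cite{zws}''. You have simply unpacked that citation --- localize via the finite partition of unity, use the $G^{1/2}$-conjugation to pass to flat $L^2$, and then invoke the usual compactly-supported-symbol approximation together with Calder\'on--Vaillancourt --- so there is nothing to compare.
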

			
			From Weyl calculus in \cite{in2} and \cite{zws}, we can represent $A_\l$ and $P$ as a pseudodifferential operator.
			\begin{prop}\label{prop:a}
			 Let 
			 \begin{align*}
			 &p(r,\rho,\th,\o)=\frac12 (\rho, \o/r) \begin{pmatrix} a_1(r,\th) & a_2(r,\th) \\ {}^t a_2(r,\th) & a_3(r,\th) \end{pmatrix}
			 \begin{pmatrix} \rho \\ \o/r \end{pmatrix} +V(r,\th) \text{ and } \\
			 &a_\l(r,\th)=j({r})\, r\, \rho -\sum_{j,k} \frac1\l j(r)(\pa_{\th_j}\tilde V(\th) ) h^{jk}(\th)\o_k.
			 \end{align*}
			 Then we obtain following formula:
			 \begin{align*}
			 &Op^W(p) = P + E_1 \text{ and } Op^W(a_\l)=A_\l+E_2,
			 \end{align*}
			 where $E_1,E_2$ satisfy the condition $\chi_I(P) E_i \chi_I(P)$ are compact for $i=1,2$.
			\end{prop}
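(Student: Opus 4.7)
The plan is to verify both identities chart by chart, decomposing each difference $Op^W(p)-P$ and $Op^W(a_\l)-A_\l$ into remainders of two types: Weyl quantizations of symbols in $S^m$ with $m<0$, which are compact on $\mathcal{H}$ by the preceding proposition, and bounded operators supported where $r$ is bounded, which become compact after sandwiching with $\chi_I(P)$ by standard localization together with elliptic regularity (since $\chi_I(P)\psi(r)$ is compact for any compactly supported $\psi$).

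First I would localize using the partition of unity, writing
\[
P = \sum_\a \tilde\chi_\a P\tilde\chi_\a - \tfrac12\sum_\a [\tilde\chi_\a,[\tilde\chi_\a,P]] + P_c,
\]
where $P_c$ gathers the contributions of $P$ outside $\tilde M_\infty$; both the double-commutator term and $P_c$ are supported where $r$ is bounded and thus fall into the second class of remainders. Within a single chart I would conjugate $\tilde\chi_\a P\tilde\chi_\a$ by $G_{(\a)}^{1/2}$; the commutators $[G_{(\a)}^{\pm 1/2},\pa_r]$ and $[G_{(\a)}^{\pm 1/2},\pa_{\th_j}]$ recombine with the divergence-form quadratic expression to reproduce exactly the Weyl-symmetric prescription built into $Op^W(p)$, while residual terms involving second derivatives of the $a_i$ and of $\log G_{(\a)}$ sit in $S^{-\m}$ for some $\m>0$ by Assumption A. The analogous computation for $A_\l$ is simpler, because $A_\l$ is already written in the manifestly symmetric anticommutator form: Weyl quantization of the leading part $j(r)r\rho$ yields exactly $\frac{1}{2i}(j(r)r\pa_r+\pa_r j(r)r)$, and Weyl quantization of the angular piece $-\tfrac{1}{\l}j(r)(\pa_{\th_j}\tilde V)h^{jk}\o_k$ reproduces the $\pa_\th$ terms in $A_\l$ up to a multiplication remainder which, together with the partition-of-unity and density-conjugation commutators, is absorbed into $E_2$.

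The main technical obstacle is the bookkeeping of symbol orders, especially for $A_\l$ whose symbol $a_\l$ already has order $1$ in the $\jap r$-scale (from the $j(r)r\rho$ factor), so the Weyl corrections must be shown to gain genuine extra decay in $\jap r$ in order to lie in a negative-order symbol class. This decay is exactly what Assumption A provides: every derivative of $a_1-1$, $a_2$, $a_3-h(\th)$, $V_s$, or of $\log G_{(\a)}-\log(r^{n-1}H(\th))$ supplies one additional factor of $\jap r^{-1}$ beyond the nominal order, and the explicit prefactor $1/\l$ can also be used where needed. Once the orders are tracked, the preceding compactness criterion immediately identifies $E_1$ and $E_2$ as having the required compactness after spectral cutoff.
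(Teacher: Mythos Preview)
The paper does not give a proof of this proposition; the statement is introduced only by the sentence ``From Weyl calculus in \cite{in2} and \cite{zws}, we can represent $A_\l$ and $P$ as a pseudodifferential operator,'' with all details deferred to those references. Your chart-by-chart outline---localize with the $\tilde\chi_\a$, conjugate by $G_{(\a)}^{1/2}$, compare with the standard Weyl quantization, and track the remainder orders via Assumption~A---is precisely what the quantization scheme of \cite{in2} unpacks to, so your approach is the expected one.

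Two small corrections to the write-up. First, since $a_\l$ is \emph{linear} in the fibre variables $(\rho,\o)$, its Weyl quantization in each chart is exact (namely $(b(x)\xi_j)^W=\tfrac12(bD_j+D_jb)$ with no lower-order term), so the ``Weyl corrections'' you worry about for $A_\l$ simply do not occur; all of $E_2$ comes from the partition-of-unity cutoffs $\tilde\chi_\a$ and the region $\{j(r)\neq 1\}$. Second, the prefactor $1/\l$ is a fixed constant and contributes no decay---it plays no role in the symbol bookkeeping. One further point worth noting: the zeroth-order Weyl remainders arising from $p$ (multiplication operators by functions that are $O(r^{-2})$ but independent of $(\rho,\o)$) do not literally lie in $S^m$ with $m<0$ under the paper's single-index symbol class, so they are not covered by your first category as stated; they are nonetheless compact after sandwiching with $\chi_I(P)$ by the local compactness of $P$, which is exactly the mechanism the paper invokes throughout (cf.\ the proof of Lemma~\ref{lem:1}).
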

			
			\subsection{Proof of the Mourre estimate.}
			The proof of Theorem \ref{prop:1} is based on that of Euclidean case in Appendix C of \cite{ach} and we only prove Theorem 2 since the proof of Proposition 3 is exactly the same with that of Theorem 2.
			
			 We have to prepare lemma to prove Theorem 4. Let $P_0=P-j(2r)\tilde V(\th)$.
			
			\begin{lem}\label{lem:1}
				Let $P$, $\l$, and $I$ satisfy the assumption for Theorem 4.
				Then,
				\begin{align*}
				& i[P , A_\l] \geq P_0 + \frac1\l \sum_{j,k}  j(r)(\pa_{\th_j}\tilde V) h^{jk}(\th)(\pa_{\th_k}\tilde V) + E, 
				\end{align*}
				where $E$ satisfies the condition $\chi_I(P) E \chi_I(P)$ is compact.
			\end{lem}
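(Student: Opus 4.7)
The plan is to split $P = P_0 + j(2r)\tilde V$ and compute $i[P_0,A_\l]$ and $i[j(2r)\tilde V,A_\l]$ separately. Since $A_\l$ is essentially the first-order operator generating the flow of $X_\l$, one has $i[f,A_\l] = -X_\l f$ for any smooth multiplication $f$. Applied to $f = j(2r)\tilde V$ on $\{r \ge 1/2\}$, where $j(2r) = 1$ and $j'(2r) = 0$, this gives
\[
i[j(2r)\tilde V, A_\l] = \l^{-1} j(r)\sum_{j,k}(\pa_{\th_j}\tilde V)h^{jk}(\pa_{\th_k}\tilde V),
\]
which is exactly the advertised positive term; the residual contribution from the precompact transition region $\{1/4 \le r \le 1/2\}$ is sandwich-compact by ellipticity of $P$. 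It therefore suffices to show $i[P_0, A_\l] \ge P_0 + E'$ with $\chi_I(P) E' \chi_I(P)$ compact.

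Write $P_0 = P_{\mathrm{kin}} + V_s$ on $\tilde M_\infty$, with $P_{\mathrm{kin}} \ge 0$ self-adjoint. Via Proposition \ref{prop:a} and Weyl calculus, $i[P_0, A_\l]$ has principal part $Op^W(\{p_0, a_\l\})$, and a direct computation gives $\{p_K, j(r)r\r\} = 2p_K$ modulo short-range symbols involving $r\pa_r a_i$ (short-range by Assumption A); $\{p_K, a_\l^{(2)}\} = O(\l^{-1}|\o|^2/r^2)$, dominated as a symbol by $C\l^{-1} p_K$ via positive-definiteness of $a_3$; and $\{V_s, a_\l\}$ is short-range (using $\m_4 > 1$ to control both $r\pa_r V_s$ and $\pa_\th V_s$). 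Crucially, $P_{\mathrm{kin}} \ge 0$ as a differential operator, so the leading piece $2P_{\mathrm{kin}}(1 - O(\l^{-1}))$ is non-negative for $\l$ large without any sharp-G{\aa}rding constant loss. The short-range symbols, the multiplication by $V_s$, and contributions from the transition region $1/2 < r < 1$ all yield operators with compact $\chi_I(P)$-sandwich (by Proposition 5 on compactness of $Op^W$ of decaying symbols, combined with ellipticity of $P$, which effectively restricts momenta on $\mathrm{Ran}\,\chi_I(P)$). Combining,
\[
i[P_0, A_\l] - P_0 = P_{\mathrm{kin}}\bigpare{1 + O(\l^{-1})} - V_s + (\text{sandwich compact}) \ge (\text{sandwich compact}),
\]
which is the required bound; combined with the first paragraph, this proves the lemma.

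The principal obstacle is justifying, at the operator level, that the symbolic bound $|\o|^2/r^2 \le C p_K$ and the non-negativity of $p_K$ lift to operator inequalities without incurring a sharp-G{\aa}rding constant loss; the key is to identify the leading quadratic form in the commutator directly with $P_{\mathrm{kin}}$ (manifestly non-negative on the form domain), rather than relying on abstract pseudodifferential positivity, and to track all short-range corrections carefully using Assumption A's $\m_i > 1$ hypothesis.
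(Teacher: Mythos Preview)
Your proposal is correct and follows essentially the same approach as the paper's proof. Both arguments separate off the $\tilde V$-contribution to extract the positive gradient term $\l^{-1}j(r)\sum_{j,k}(\pa_{\th_j}\tilde V)h^{jk}(\pa_{\th_k}\tilde V)$, then reduce the kinetic commutator to a Poisson-bracket computation via Proposition~\ref{prop:a} and Weyl calculus, obtaining $2p_K$ minus an $O(\l^{-1})$ quadratic form that is absorbed for $\l$ large; the only difference is organizational (you split $P=P_0+j(2r)\tilde V$ and track $P_{\mathrm{kin}}$ explicitly, while the paper splits $p=(p-V)+V$ at the symbol level and writes the $\l^{-1}$-correction as a single matrix $B=O(1)$), and your explicit remark that non-negativity comes from identifying the quadratic form with $P_{\mathrm{kin}}$ itself rather than from sharp G{\aa}rding is exactly the mechanism the paper uses implicitly.
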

			\begin{proof}
				From, Proposition \ref{prop:a} and locally compactness of $P$, we obtain
				\begin{align*}
				& i[P , A_\l] =i[Op^W(p)-V, Op^W(a_\l)] + \frac{j(r)}\l \sum_{j,k}  (\pa_{\th_j}\tilde V) h^{jk}(\th)(\pa_{\th_k}\tilde V) + E_3, 
				\end{align*}
				where $E_3$ satisfies the condition $\chi_I(P) E_3 \chi_I(P)$ is compact.
				
				From the formula for the commutator of the pseudodifferential operators in \cite{zws}, we see
				\begin{align*}
				& [Op^W(p) , Op^W(a_\l)] =Op^W(\{p,a_\l\}) +E_4,
				\end{align*}
				where $E_4$ satisfies the condition $\chi_I(P) E_4 \chi_I(P)$ is compact and  $\{\cdot,\cdot\}$ denotes the Poisson bracket.
				
				Let $x=(r,\th)$ and $\z=(\rho,\o)$ , we obtain the followings:
				\begin{align*}
				& 2\pa_x (p - V) = (\rho, \o) \{ \pa_x \begin{pmatrix} a_1(r,\th) & a_2(r,\th)/r \\ {}^t a_2(r,\th)/r & a_3(r,\th)/{r^2} \end{pmatrix}\}
				\begin{pmatrix} \rho \\ \o \end{pmatrix} , \\
				& 2\pa_\z (p - V)= 2 \begin{pmatrix} a_1(r,\th) & a_2(r,\th) \\ {}^t a_2(r,\th) & a_3(r,\th)/r \end{pmatrix}
				\begin{pmatrix} \rho \\ \o/r \end{pmatrix} , \\
				& 2\pa_x a_\l= \Bigl( \{\pa_r (j({r}) r) \}\rho -\sum_{j,k} \frac{(\pa_r j(r))}\l (\pa_{\th_j}\tilde V(\th) ) h^{jk}(\th)\o_k , \\
				& \qquad \qquad \qquad \qquad \qquad \frac{j(r)}\l \sum_{j,k}  {}^t\pa_\th\{(\pa_{\th_j}\tilde V(\th) ) h^{jk}(\th)\}\o_k \Bigr) , \\
				& 2\pa_\z a_\l= \Bigl(j({r})r,(-\sum_{j} \frac1\l j(r)(\pa_{\th_j}\tilde V(\th) ) h^{jk}(\th))_k \Bigr) , \\
				& \{p-V,a_\l\}=\pa_x a_\l\pa_\z (p - V) - \pa_\z a_\l\pa_x (p - V) \\
				& = (\rho, \o/r) \begin{pmatrix} a_1(r,\th) & a_2(r,\th) \\ {}^t a_2(r,\th) & a_3(r,\th) \end{pmatrix}
				\begin{pmatrix} \rho \\ \o/r \end{pmatrix} -\frac1\l (\rho, \o/r) B	\begin{pmatrix} \rho \\ \o/r \end{pmatrix} +E_5,
				\end{align*}
				where $B=\mathcal{O}(1)$ and $E_5$ satisfies the condition $\chi_I(P) E_5 \chi_I(P)$ is compact.
				By taking $\l$ sufficiently large, we obtain
				\begin{align*}
				& \{p-V,a_\l\} > P_0 +E,
				\end{align*}
				where $E$ satisfies the condition $\chi_I(P) E \chi_I(P)$ is compact. This concludes the assertion.
			\end{proof}
			
			\begin{proof}[Proof of Theorem \ref{prop:1}]
				Since I is an interval, we can take $\n\in\re$ and $\t\geq 0$ such that $I=(\n-\t,\n+\t)$. Let $\chi \in C^{\infty}_c(\re)$ be such that  $\chi(x)=1$ if $x\in I'=(\n-\t',\n+\t') \Subset I$ and $ \mathrm{supp}\chi \Subset I$. Then it suffices to prove Theorem 4 with $\chi_I$ is replaced by such $\chi$. For the technical convince, we assume $\chi$ satisfies this condition with $I'$ and $\t'$ are replaced by $I$ and $\t$.
				
				From Lemma 5, it is sufficient to prove that
				\[
				\chi(P) \{P_0 + \frac1\l \sum_{j,k}  j(r)(\pa_{\th_j}\tilde V) h^{jk}(\th)(\pa_{\th_k}\tilde V)\}\chi(P) \geq \a \chi(P) + K.
				\]
				
				Since $I\Subset \re\setminus (\s_{pp}(P) \bigcup \mathrm{Cv}(\tilde V))$, we can find $\y>0$ such that\\
				\[ 
				I'= (\n-\t-\y,\n+\t+\y) \Subset \re\setminus (\s_{pp}(P) \bigcup \mathrm{Cv}(\tilde V)).
				\]
				
				We define $\mathcal{O}_i \subset \pa M$ ($i=1,2,3$) by 
				\begin{align*}
				&\mathcal{O}_1=\tilde V^{-1}(\n-\t-\y,\n+\t+\y),\\
				&\mathcal{O}_2=\tilde V^{-1}(-\infty,\n-\t-\frac12\y),\\
				&\mathcal{O}_3=\tilde V^{-1}(\n+\t+\frac12\y,\infty)
				\end{align*}
				
				Since $\pa M$ is compact and smooth, we can take a partition of unity $\{\tilde f_i\}_{i=1,2,3}$ compatible with  $\{\mathcal{O}_i\}_{i=1,2,3}$ i.e. there exist $\tilde f_i\in C^\infty(\pa M)$ such that $\mathrm{supp}(\tilde f_i) \subset \mathcal{O}_i$ for $i=1,2,3$ and $\sum_{i=1}^{3} \tilde f^2_i =1$.
				We define $f_i \in C^\infty(M)$ by $f_i(r,\th) =j(r)\tilde f_i(\th)$ and $\psi \in C^\infty_c(M)$ by $\psi=1-\sum_{i=1}^{3} f^2_i$.
				
				Then we obtain
				\begin{align*}
				& \chi(P) \{P_0 + \frac1\l \sum_{j,k}  j(r)(\pa_{\th_j}\tilde V) h^{jk}(\th)(\pa_{\th_k}\tilde V)\}\chi(P)\\
				& = \chi(P) \sum_{i=1}^{3} f^2_i \{P_0 + \frac1\l \sum_{j,k}  j(r)(\pa_{\th_j}\tilde V) h^{jk}(\th)(\pa_{\th_k}\tilde V)\}\chi(P) +K_1\\
				& = \sum_{i=1}^{3} \chi(P)  f_i \{P_0 + \frac1\l \sum_{j,k}  j(r)(\pa_{\th_j}\tilde V) h^{jk}(\th)(\pa_{\th_k}\tilde V)\} f_i \chi(P) +K_2		
				\end{align*}
				where $K_1$ and $K_2$ are compact operators and we have used the Helffer-Sj\"ostrand formula from Section 14 of \cite{zws} in the third line.
				
				Since $\pa_\th \tilde V \neq 0$ on $\mathrm{supp} f_1$, there exists $\d>0$ such that
				\begin{align*}
				& \sum_{j,k} (\pa_{\th_j}\tilde V) h^{jk}(\th)(\pa_{\th_k}\tilde V) > \d
				\end{align*}
				Then we obtain,
				\begin{align*}
				& \chi(P)  f_1 \{P_0 + \frac1\l \sum_{j,k}  j(r)(\pa_{\th_j}\tilde V) h^{jk}(\th)(\pa_{\th_k}\tilde V)\} f_1 \chi(P)\\
				& \geq \frac\d\l \chi(P)  f^2_1\chi(P) + K_3
				\end{align*}
				where $K_3$ is a compact operator.
				
				Since $ \n-\tilde V \geq \t + \frac\y2$ on  $\mathrm{supp}f_2$ and $\chi(P)(P-\n)\chi(P) \geq \t\chi(P)^2 $, we obtain the following
				\begin{align*}
				& \chi(P)  f_2 \{P_0 + \frac1\l \sum_{j,k}  j(r)(\pa_{\th_j}\tilde V) h^{jk}(\th)(\pa_{\th_k}\tilde V)\} f_2 \chi(P)\\
				& \geq \chi(P)  f_2 P_0 f_2\chi(P) + K_4\\
				& \geq \chi(P)  f_2 (P - \n +\n -\tilde V) f_2\chi(P) + K_4\\
				& \geq \chi(P)  f_2 (P - \n + \t + \frac\y2) f_2\chi(P) + K_4\\
				& = f_2 \chi(P)   (P - \n + \t + \frac\y2) \chi(P) f_2 +K_5\\
				& \geq f_2 \chi(P)   (-\t + \t + \frac\y2) \chi(P) f_2 +K_5\\
				& = \frac\y2\chi(P)  f^2_2 \chi(P) +K_6
				\end{align*}
				where $K_4$, $K_5$ and $K_6$ are compact operators.
				
				Since $ \n-\tilde V \leq - \t - \frac\y2$ on  $\mathrm{supp}f_3$ and $\chi(P)(P-\n)\chi(P) \leq \t\chi(P)^2 $, we obtain the following
				\begin{align*}
				& \chi(P)  f_3 P_0 f_3 \chi(P)\\
				& = \chi(P)  f_3 (P - \n +\n -\tilde V) f_3\chi(P) \\
				& \leq \chi(P)  f_3 (P - \n - \t - \frac\y2) f_3\chi(P) \\
				& = f_3 \chi(P)   (P - \n - \t - \frac\y2) \chi(P) f_3 +K_7\\
				& \leq f_3 \chi(P)   (\t - \t - \frac\y2) \chi(P) f_3 +K_7\\
				& = -\frac\y2\chi(P)  f^2_3 \chi(P) +K_8,
				\end{align*}
				where $K_7$ and $K_8$ are compact operators.
				Thus we obtain, 
				\begin{align*}
				& 0 \leq \chi(P)  f_3 (P_0 + \frac\y2) f_3 \chi(P) \leq K_8
				\end{align*}
				and hence $\chi(P)  f_3 (P_0 + \frac\y2) f_3 \chi(P) $ is a compact operator.
				Let $\c<\frac\y2$, then we see
				\begin{align*}
				&  \c\chi(P)  f^2_3 \chi(P)-\chi(P)  f_3 (P_0 + \frac\y2) f_3\chi(P)\\
				& = -\chi(P)  f_3 P_0 f_3\chi(P)-(\frac\y2-\c)\chi(P)  f^2_3\chi(P)\\
				& \leq 0 \\
				& \leq \chi(P)  f_3 \{P_0 + \frac1\l \sum_{j,k}  j(r)(\pa_{\th_j}\tilde V) h^{jk}(\th)(\pa_{\th_k}\tilde V)\} f_3 \chi(P). 
				\end{align*}
				Since $\chi(P)  f_3 (P_0 + \frac\y2) f_3 \chi(P) $ is a compact operator, we get inequality for $f_3$.
				
				Finally, by taking $\a = \min\{\frac\d\l,\frac\y2, \c\}$, we conclude the assertion.
			\end{proof}
			
			\begin{proof}[Proof of Theorem \ref{thm:2}]
				From Theorem 4 and the Mourre theory in \cite{ya}, we only have to prove the following:
				\begin{enumerate}
					\item {$P$ is $C^2(A_\l)$ class i.e. $[[(P-z)^{-1},A_\l],A_\l]$ is bounded for any $z \in \r(P)$.}
					\item {$\jap{r}^{-s} (P-E\pm i0)^{-1}\jap{r}^{-s} 
						=\lim_{\e\to+0}\jap{r}^{-s} (P-E\pm i\e)^{-1}\jap{r}^{-s}$ converge uniformly in $E\in I$.}
				\end{enumerate}
				
				We see, from (2.10) in the proof of lemma 7, $[P,A_\l]$ is $P$-bounded. We can calculate $[[P,A_\l],A_\l]$ is $P$-bounded similarly. Since
				\begin{align*}
				& [(P-z)^{-1},A_\l],A_\l]\\
				& =(P-z)^{-1}[P,A_\l][(P-z)^{-1},A_\l] + (P-z)^{-1}[[P,A_\l],A_\l](P-z)^{-1}\\
				& + [(P-z)^{-1},A_\l] [P,A_\l](P-z)^{-1},
				\end{align*}
				we learn $[[(P-z)^{-1},A_\l],A_\l]$ is bounded.	 
				
				Concerning claim 2, we assume the following lemma for the moment.
				\begin{lem}\label{lem:2}
					$\jap {A_\l}^s (P\pm i)^{-1}\jap r^{-s}$ are bounded operators on $\mathcal{H}$.
				\end{lem}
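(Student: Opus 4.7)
The plan is to reduce to the case $s = 1$ via complex interpolation, then prove the $s = 1$ case by a weighted elliptic estimate. Specifically, I would consider the holomorphic family $T(z) = \jap{A_\l}^z (P\pm i)^{-1} \jap{r}^{-z}$ on the strip $0 \leq \Re z \leq 1$. On the imaginary axis $\Re z = 0$, both $\jap{A_\l}^{i\tau}$ and $\jap{r}^{-i\tau}$ are unitary, so $T(i\tau)$ is uniformly bounded; granted boundedness at $\Re z = 1$ with polynomial growth in $|\Im z|$, the Stein--Hirschman three lines lemma delivers the bound at $\Re z = s \in (0,1)$, which is all that is actually used in the sequel (for $s \geq 1$ the corresponding estimate in Theorem \ref{thm:2} follows by inserting the bounded factor $\jap{r}^{-(s-s_0)}$ for a fixed $s_0 \in (1/2,1)$).

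For the case $s = 1$, the boundedness of $A_\l (P\pm i)^{-1}\jap{r}^{-1}$ reduces, by setting $u = (P\pm i)^{-1}\jap{r}^{-1}f$, to the weighted elliptic estimate
\[
\|A_\l u\|_{\mathcal{H}} \leq C\bigpare{\|\jap{r}(P\pm i)u\|_{\mathcal{H}} + \|\jap{r}u\|_{\mathcal{H}}}, \quad u \in C_c^\infty(M).
\]
I would prove this by integration by parts: the radial part of $A_\l$ is essentially $-ir\partial_r$, so $\|A_\l u\|^2$ is controlled, modulo bounded error, by $\|r\partial_r u\|^2$, and the latter is estimated by testing $(P\pm i)u = g$ against $r^2\bar u$ and exploiting the leading radial term $-\tfrac{1}{2}\partial_r^2$ of $P$. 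The angular part of $A_\l$ carries the small prefactor $1/\lambda$ together with the cutoff $j(r)$; combined with the elliptic estimate $\|\partial_\theta u/r\|_{\mathcal{H}} \leq C(\|Pu\|+\|u\|)$, its contribution is absorbable for $\lambda$ large, which is the standing hypothesis of Theorem \ref{prop:1}.

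The main obstacle is precisely this weighted coercivity estimate. The unbounded factor $r$ in $A_\l$ must be absorbed into the weight $\jap{r}$ on the right by a careful integration by parts, with all commutator corrections from the perturbed metric coefficients $a_i$ tracked so that their decay rates $\m_i > 1$ render them subordinate to the leading radial term, and with $\lambda$ chosen large enough that the angular part of $A_\l$ remains an absorbable perturbation. Once the estimate is in hand, applying it to $u = (P\pm i)^{-1}\jap{r}^{-1}f$ gives $\|A_\l u\| \leq C\|f\|$, which combined with the unitary factor $\jap{A_\l}^{i\tau}$ yields the required bound at $\Re z = 1$ and, via interpolation, at all $s \in (0,1)$.
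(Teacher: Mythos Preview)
Your interpolation reduction to $s=0,1$ matches the paper exactly. For the endpoint $s=1$, however, you take a genuinely different and considerably heavier route than the paper.

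The paper's argument at $s=1$ is a one-line commutator identity:
\[
A_\l (P\pm i)^{-1}\jap{r}^{-1}
= (P\pm i)^{-1}[P,A_\l](P\pm i)^{-1}\jap{r}^{-1}
+ (P\pm i)^{-1}A_\l\jap{r}^{-1}.
\]
Both terms on the right are bounded because $[P,A_\l]$ is $P$-bounded (already established in the proof of the Mourre commutator lemma) and $A_\l\jap{r}^{-1}$ is a first-order differential operator with bounded coefficients, hence also $P$-bounded. No integration by parts, no weighted coercivity, no smallness of $1/\l$ is invoked. Your approach via the weighted elliptic estimate $\|A_\l u\|\le C(\|\jap{r}(P\pm i)u\|+\|\jap{r}u\|)$ is correct in principle, but to close it you still need $\jap{r}(P\pm i)^{-1}\jap{r}^{-1}$ bounded (to control the $\|\jap{r}u\|$ term after substituting $u=(P\pm i)^{-1}\jap{r}^{-1}f$), which is itself a commutator estimate of the same flavor as the paper's direct argument. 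In effect you are replacing a single commutation with a coercivity estimate plus a commutation, and along the way you introduce an unnecessary dependence on $\l$ being large to absorb the angular part. The paper's identity bypasses all of this: it uses only that $[P,A_\l]$ and $A_\l\jap{r}^{-1}$ are $P$-bounded, facts that hold for every $\l>0$.
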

				When Im$z\geq 1$, claim 2 is obvious and hence we may assume Im$z \leq 1$. Then we can calculate 
				\begin{align*}
				& (P-z)^{-1}\\
				& = (P+i)^{-1} +(z+i)(P+i)^{-2} +(z+i)^2(P+i)^{-1}(P-z)^{-1}(P+i)^{-1}
				\end{align*}
				Since first and second term in the right hand side are uniformly bounded if Im$z \leq 1$, we only have to treat the third term. Concerning the third term, we see
				\begin{align*}
				& \jap r^{-s}(z+i)^2(P+i)^{-1}(P-z)^{-1}(P+i)^{-1} \jap r^{-s}\\
				& =(z+i)^2\jap r^{-s}(P + i)^{-1}\jap {A_\l}^s\\
				& \quad \times+ \jap{A_\l}^{-s} (P-z)^{-1} \jap{A_\l}^{-s}\\
				& \quad \times  \jap {A_\l}^s (P + i)^{-1}\jap r^{-s} .
				\end{align*}
				From Theorem \ref{thm:2}, we learn $\jap{A_\l}^{-s} (P-z)^{-1} \jap{A_\l}^{-s}$ is uniformly bounded and thus right hand side of the above equality is uniformly bounded and this completes the proof of Theorem 2.
			\end{proof}
			
			\begin{proof}[Proof of Lemma \ref{lem:2}]
				From the usual interpolation argument, it suffices to prove Lemma \ref{lem:2} when $s=0,1$. When $s=0$, it is obvious. When $s=1$, we see
				\begin{align*}
				& A_\l (P\pm i)^{-1}\jap r^{-1} = (P\pm i)^{-1}[P,A_\l](P\pm i)^{-1}\jap r^{-1} +  (P\pm i)^{-1}A_\l r^{-1}.
				\end{align*}
				Proof of lemma 5 shows the right hand side of the above equality is bounded. Thus $A_\l (P\pm i)^{-1}\jap r^{-1}$ is also bounded, which concludes the proof.
			\end{proof}
			
		\subsection{Properties of $(PJ-JP_f)$.}
		We now prove some properties of $(PJ-JP_f)$.
		\begin{prop}\label{prop:2}
			(1) $PJ-JP_f$ can be written as following,
			\begin{align*}
				& (PJ-JP_f) = - (\tilde J\pa_r + \pa_r\tilde J)\f + \frac{(n-1)^2-2(n-1)}{4r^2}J\f + \frac{n-1}{2r} \tilde J\f\\
				& -\frac12 G^{-1} (\pa_r, \pa_\th/r) G \begin{pmatrix} a_1-1 & a_2 \\ {}^t a_2 & a_3-r^2\tilde j h \end{pmatrix}
				\begin{pmatrix} \pa_r \\ \pa_\th/r \end{pmatrix}J +V_sJ,
			\end{align*}
			
			where $\tilde J: \mathcal{H}_f\to
			\mathcal{H}$ is such that
			\[
			(\tilde J\f)(r,\th) = r^{-(n-1)/2}\,(\pa_r j(r)) \, \f(r,\th)
			\quad \text{if }(r,\th)\in M_\infty, 
			\]
			and
			\[\tilde  J\f(x)=0 \text{ if } x\notin M_\infty .
			\]
			(2) $(P-z)^{-1}\jap{r}^s(PJ-JP_f)\jap{r}^s$ is bounded for some $1/2<s<1$ and any $z\in\rho(P)$.\\
			(3) $\jap{r}^s\chi(P)(PJ-JP_f)\jap{r}^s$ is bounded for some $1/2<s<1$ and any $\chi\in C^\infty_c(\re)$.
		\end{prop}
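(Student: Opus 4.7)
For Part~(1) the plan is a direct Leibniz-rule computation. I substitute $J\f = r^{-(n-1)/2} j(r) \f$ into the explicit form of $P$ on $\tilde M_\infty$: each derivative $\pa_r$ or $\pa_\th/r$ may fall on any of the three factors $r^{-(n-1)/2}$, $j(r)$, $\f$. I compare with a reference version of $P$ obtained by replacing the coefficient matrix by $\begin{pmatrix}1 & 0\\0 & h\end{pmatrix}$ and $V$ by $\tilde V$. The differences between $P$ and this reference produce exactly the matrix-perturbation term in the stated formula (using $r^2 \tilde j(r) h = h$ on $\tilde M_\infty$) and the $V_sJ$ term. For the reference part, the angular piece is invariant under the conjugation by $r^{-(n-1)/2}$ and matches that of $P_f$ (since $\tilde j(r) = 1/r^2$ on $\tilde M_\infty$), while conjugation of the radial piece $-\tfrac12 G^{-1}\pa_r G\pa_r = -\tfrac12 \pa_r^2 - \tfrac{n-1}{2r}\pa_r$ by $r^{-(n-1)/2}$ produces $-\tfrac12\pa_r^2$ plus a curvature term proportional to $J\f/r^2$, which is the stated $r^{-2}J\f$ contribution. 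The remaining terms, where at least one derivative falls on $j(r)$, give the $\tilde J$-contributions: $-(\tilde J\pa_r + \pa_r\tilde J)\f$ from the first-order Leibniz action and $\tfrac{n-1}{2r}\tilde J\f$ from the mixed cross-terms.

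For Part~(2) I use the formula from~(1) and estimate each summand after conjugation by $\jap{r}^s$. Fix $s$ with $1/2 < s < \tfrac12\min\{\m_1,\m_2,\m_3,\m_4,2\}$, possible since each $\m_i>1$. The summands split into two classes: those supported where $\tilde J\ne 0$ (compact in $r$), for which $\jap{r}^s$ acts as bounded multiplication; and those with coefficients decaying at rates $r^{-2}$, $r^{-\m_i}$ ($i=1,2,3$), or $r^{-\m_4}$, absorbed by $\jap{r}^{2s}$ by the choice of~$s$. Moving $\jap{r}^s$ past the differential factors produces bounded commutators, so $B:=\jap{r}^s(PJ-JP_f)\jap{r}^s$ is a differential operator of order at most two with bounded coefficients. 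To bound $(P-z)^{-1}B$, I pass to the adjoint: $(P-z)^{-1}B = [B^*(\bar P - \bar z)^{-1}]^*$. The uniform ellipticity of $P$ on $\tilde M_\infty$ together with the boundedness of its coefficients gives the elliptic regularity estimate $\|u\|_{H^2_{\mathrm{loc}}}\leq C(\|u\|+\|\bar P u\|)$, so $(\bar P-\bar z)^{-1}\colon\mathcal H\to\dom(\bar P)$ composed with the second-order $B^*$ is bounded on $\mathcal H$, giving~(2).

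Part~(3) is analogous with $\chi(P)$ in place of $(P-z)^{-1}$. Since $\chi\in C^\infty_c(\re)$, $\chi(P)\colon\mathcal H\to\dom(P^N)$ is bounded for every $N$, so the same adjoint argument shows that $\chi(P)$ composed with any second-order operator with bounded coefficients is bounded. To handle the extra $\jap{r}^s$ on the far left I decompose
\[
\jap{r}^s\chi(P)(PJ-JP_f)\jap{r}^s = \chi(P)\jap{r}^s(PJ-JP_f)\jap{r}^s + [\jap{r}^s,\chi(P)](PJ-JP_f)\jap{r}^s.
\]
The first summand is bounded by the analysis of~(2). For the commutator, the Helffer--Sj\"ostrand formula yields $[\jap{r}^s,\chi(P)] = \frac{1}{2\pi i}\int \bar\pa\tilde\chi(z)\,(z-P)^{-1}[\jap{r}^s,P](z-P)^{-1}\,dz\wedge d\bar z$, and since $[\jap{r}^s,P]$ is first order with coefficients of order $\jap{r}^{s-1}$ (bounded), the two resolvents absorb one derivative to give a bounded integrand uniformly in $z$ on the support of $\bar\pa\tilde\chi$. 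Thus $[\jap{r}^s,\chi(P)]$ is bounded, and its composition with $(PJ-JP_f)\jap{r}^s$ is treated as in~(2).

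The main obstacle is the careful bookkeeping in Part~(1): tracking all Leibniz contributions from the weight $r^{-(n-1)/2}$ and the cutoff $j(r)$, and verifying that the curvature constant arising from conjugating the radial Laplacian combines with the cross-terms to reproduce the stated coefficients exactly. Parts~(2) and~(3) then reduce to systematic applications of elliptic regularity, the decay in Assumption~A, and standard Helffer--Sj\"ostrand commutator estimates.
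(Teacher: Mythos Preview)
Your proposal is correct and follows essentially the same route as the paper. Part~(1) is the same direct Leibniz computation; Part~(2) uses the same ingredients (compact support of $\tilde J$, Assumption~A decay for the matrix and $V_s$ terms, $P$-boundedness of the derivatives), only you package them via a single ``second-order operator with bounded coefficients plus elliptic regularity'' argument whereas the paper treats the summands one by one; Part~(3) uses the identical decomposition into $\chi(P)\jap{r}^s(PJ-JP_f)\jap{r}^s$ plus the commutator term handled by Helffer--Sj\"ostrand.

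One point in Part~(3) should be tightened. You first conclude that $[\jap{r}^s,\chi(P)]$ is bounded, and then say its composition with $(PJ-JP_f)\jap{r}^s$ ``is treated as in (2).'' But boundedness of the commutator alone does not let you compose with the unbounded second-order factor $(PJ-JP_f)\jap{r}^s$; you must retain the resolvent structure from the Helffer--Sj\"ostrand integral. The paper does exactly this: it writes the full product as
\[
\int_{\co}\pa_{\bar z}\tilde\chi(z)\,(z-P)^{-1}[\jap{r}^s,P](z-P)^{-1}(P-i)\cdot (P-i)^{-1}(PJ-JP_f)\jap{r}^s\, dz\wedge d\bar z,
\]
so that the rightmost resolvent (via the inserted $(P-i)(P-i)^{-1}$) absorbs the differential operator, and then checks the $O((\mathrm{Im}\,z)^{-2})$ growth of the operator part is cancelled by $|\pa_{\bar z}\tilde\chi(z)|=O((\mathrm{Im}\,z)^2)$. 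Your phrase ``treated as in (2)'' presumably intends this, but as written the logic passes through an intermediate boundedness statement that discards the needed resolvent.
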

		\begin{proof}
			From direct computations, for $\f \in C^\infty_c(M_f)$, we learn
			\begin{align*}
				& J \pa_r^2 \f\\
				& = r^{-(n-1)/2}\, j(r) \pa^2_r\f\\
				& = \pa_r r^{-(n-1)/2}\, j(r)\pa_r \f \\
				& + \{-r^{-(n-1)/2}(\pa_r  j(r)) + \frac{n-1}{2r}\times r^{-(n-1)/2} j(r)\}\pa_r\f\\
				& = \pa^2_r r^{-(n-1)/2}\, j(r)\f \\
				& + \{-r^{-(n-1)/2}(\pa_r  j(r)) + \frac{n-1}{2r} \times r^{-(n-1)/2}  j(r)\}\pa_r\f\\
				& + \pa_r\{-r^{-(n-1)/2}(\pa_r  j(r)) + \frac{n-1}{2r} \times r^{-(n-1)/2}  j(r)\}\f\\
				& = r^{-(n-1)}\pa_r r^{n-1}\pa_r r^{-(n-1)/2}\, j(r)\pa_r \f \\
				& - \{r^{-(n-1)/2}(\pa_r  j(r))\pa_r + \pa_rr^{-(n-1)/2}(\pa_r  j(r))  \}\f\\
				& + \frac{(n-1)^2-2(n-1)}{4r^2}\times r^{-(n-1)/2} j(r)\f\\
				& + \frac{n-1}{2r} \times r^{-(n-1)/2}(\pa_r  j(r))\f.
			\end{align*}
			Thus we obtain,
			\begin{align*}
				& (PJ-JP_f)\f\\
				& =- \{r^{-(n-1)/2}(\pa_r  j(r))\pa_r + \pa_rr^{-(n-1)/2}(\pa_r  j(r))  \}\f\\
				& + \frac{(n-1)^2-2(n-1)}{4r^2}\times r^{-(n-1)/2} j(r)\f\\
				& + \frac{n-1}{2r} \times r^{-(n-1)/2}(\pa_r  j(r))\f\\
				& -\frac12 G^{-1} (\pa_r, \pa_\th/r) G \begin{pmatrix} a_1-1 & a_2 \\ {}^t a_2 & a_3-r^2\tilde j h \end{pmatrix}
				\begin{pmatrix} \pa_r \\ \pa_\th/r \end{pmatrix}J +V_sJ\\
				&=- (\tilde J\pa_r + \pa_r\tilde J)\f + \frac{(n-1)^2-2(n-1)}{4r^2}J\f + \frac{n-1}{2r} \tilde J\f\\
				& -\frac12 G^{-1} (\pa_r, \pa_\th/r) G \begin{pmatrix} a_1-1 & a_2 \\ {}^t a_2 & a_3-r^2\tilde j h \end{pmatrix}
				\begin{pmatrix} \pa_r \\ \pa_\th/r \end{pmatrix}J +V_sJ,
			\end{align*}
			Concerning (2), we note that $(z-P)^{-1}\jap{r}^s\{\frac{(n-1)^2-2(n-1)}{4r^2}J + \frac{n-1}{2r} \tilde J\}\jap{r}^{-s}$ is bounded as $\jap{r}^s\{\frac{(n-1)^2-2(n-1)}{4r^2}J + \frac{n-1}{2r} \tilde J\}$ is bounded.
			
			Also, $(z-P)^{-1}\jap{r}^s\pa_r\tilde J\jap{r}^s$ is bounded since $\pa_r$ is $P$-bounded and $\jap{r}^{2s}\tilde J$ is bounded for any $s$.
			
			For $\f \in C^\infty_c(M_f)$, we see
			\begin{align*}
				& \tilde J \pa_r \f\\
				& =\pa_r \tilde J \f +\{-r^{-(n-1)/2}(\pa^2_r j(r)) + \frac{n-1}{2r}r^{-(n-1)/2}(\pa_r j(r))\}\f.
			\end{align*}
			Thus we obtain $[\tilde J,\pa_r] =-r^{-(n-1)/2}{\pa^2_r j(r)} + \frac{n-1}{2r}r^{-(n-1)/2}{\pa_r j(r)}$ and so $(z-P)^{-1} \jap{r}^s[\tilde J,\pa_r]  \tilde J\jap{r}^s$ is bounded.
			To sum up, $(z-P)^{-1}\jap{r}^s\tilde J\pa_r\jap{r}^s= (z-P)^{-1} \jap{r}^s[\tilde J,\pa_r] \tilde J\jap{r}^s + (z-P)^{-1}\jap{r}^s\pa_r\tilde J\jap{r}^s$ is bounded. From these calculation and definition of $\tilde j$ and the assumption for $P$, we obtain (2).
			
			Concerning (3), first we calculate as following,
			\vspace{-2mm}
			\begin{align*}
				& \jap{r}^s\chi(P)(PJ-JP_f)\jap{r}^s\\
				& =  \chi(P) \jap{r}^s (PJ-JP_f)\jap{r}^s + [\jap{r}^s,\chi(P)](PJ-JP_f)\jap{r}^s\\
				& =  \chi(P)(P-i) (P-i)^{-1}\jap{r}^s (PJ-JP_f)\jap{r}^s + [\jap{r}^s,\chi(P)](PJ-JP_f)\jap{r}^s.
			\end{align*}
			First term is bounded from (2). Thus the problem is the second term.
			
			From the Helffer-Sj\"ostrand formula, we obtain
			\begin{align*}
				& [\jap{r}^s,\chi(P)](PJ-JP_f)\jap{r}^s\\
				& =(2i\pi)^{-1}\int_{\co}\partial_{\bar{z}}\tilde \chi(z)[\jap{r}^s,(z-P)^{-1}]\,dz\wedge d{\bar{z}}(PJ-JP_f)\jap{r}^s\\
				& =(2i\pi)^{-1}\int_{\co}\partial_{\bar{z}}\tilde \chi(z)(z-P)^{-1}[ \jap{r}^s , P ](z-P)^{-1}\,dz\wedge d{\bar{z}}(PJ-JP_f)\jap{r}^s\\
				& =(2i\pi)^{-1}\int_{\co}\partial_{\bar{z}}\tilde \chi(z)(z-P)^{-1}[ \jap{r}^s , P ](z-P)^{-1}(P-i)\,dz\wedge d{\bar{z}}\\
				& \times (P-i)^{-1}(PJ-JP_f)\jap{r}^s
			\end{align*}
			
			Similarly to the proof of (2), we can see $(P-i)^{-1}(PJ-JP_f)\jap{r}^s$ is bounded. Concerning the integrability of above integral, we only have to prove the integrability in Im$z < 1$. That is because $\tilde \chi$ is the almost analytic extension and $(P-z)^{-1}$  has no critical point in Im$z > 1$.
			Since $\norm{(z-P)^{-1}(i-P)}=O({\mathrm{Im}z}^{-1})$ and $\norm{(z-P)^{-1}}=O({\mathrm{Im}z}^{-1})$ if $\mathrm{Im}z < 1$, the integrand is bounded uniformly in $z$, where we have used the fact $\bigabs{\partial_{\bar{z}}\tilde\chi(z)}=O({(\mathrm{Im}z)}^2)$ as it is almost analytic extension.
			Thus we obtain Proposition \ref{prop:2}.
		\end{proof}
		
	\section{Existence and completeness of $W_\pm$}
	We prepare some lemmas to prove Theorem \ref{thm:1}.
	
	\begin{lem}\label{lem:3}
		Let $\chi\in C^\infty_c(\re)$ then $\chi(P)J-J\chi(P_f)$ is a compact operator.
	\end{lem}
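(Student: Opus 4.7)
My plan is to apply the Helffer-Sj\"ostrand formula to both $\chi(P)$ and $\chi(P_f)$ to reduce the claim to the compactness of a resolvent-difference integrand, and then to extract compactness by splitting a weight $\jap{r}^s\jap{r}^{-s}$ using the structural information in Proposition \ref{prop:2}.

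Fix an almost analytic extension $\tilde\chi$ of $\chi$. Applying the Helffer-Sj\"ostrand formula to $\chi(P)$ and $\chi(P_f)$ separately and using the standard identity $(z-P)^{-1}J-J(z-P_f)^{-1}=(z-P)^{-1}(PJ-JP_f)(z-P_f)^{-1}$, I obtain
\[
\chi(P)J-J\chi(P_f)=\frac{1}{2i\pi}\int_\co \partial_{\bar z}\tilde\chi(z)\,(z-P)^{-1}(PJ-JP_f)(z-P_f)^{-1}\,dz\wedge d\bar z.
\]

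The core claim is that the integrand is compact for every $z$ with $\Im z\neq 0$. Fix $1/2<s<1$ as in Proposition \ref{prop:2} and factor
\[
(z-P)^{-1}(PJ-JP_f)(z-P_f)^{-1}=\bigl[(z-P)^{-1}(PJ-JP_f)\jap{r}^{s}\bigr]\cdot\bigl[\jap{r}^{-s}(z-P_f)^{-1}\bigr].
\]
The second factor is compact on $\mathcal{H}_f$: $(z-P_f)^{-1}$ maps $\mathcal{H}_f$ continuously into $\dom(P_f)\subset H^2_{\mathrm{loc}}(M_f)$, and the decay of $\jap{r}^{-s}$ promotes the local Rellich compactness into a global one via the usual truncation argument. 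For the first factor I appeal to Proposition \ref{prop:2}(2) together with the observation that $(PJ-JP_f)\jap{r}^{s}$, computed through the Leibniz rule starting from the explicit formula in Proposition \ref{prop:2}(1), is again a second-order differential operator whose coefficients decay at infinity (at rate $\jap{r}^{-\mu_i+s}$ or faster, hence bounded), so that composition with $(z-P)^{-1}$ yields a bounded operator $\mathcal{H}_f\to\mathcal{H}$.

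To upgrade pointwise compactness to compactness of the integral, I use $\partial_{\bar z}\tilde\chi(z)=O(|\Im z|^N)$ for any $N$ together with $\norm{(z-P)^{-1}(PJ-JP_f)(z-P_f)^{-1}}=O(|\Im z|^{-2})$, so the integral converges absolutely in operator norm; a norm limit of compact operators is compact, which completes the proof. The main obstacle is making rigorous the boundedness of $(z-P)^{-1}(PJ-JP_f)\jap{r}^{s}$ from the sandwiched bound of Proposition \ref{prop:2}(2), as one must absorb the commutator $[(z-P)^{-1},\jap{r}^{s}]=(z-P)^{-1}[P,\jap{r}^{s}](z-P)^{-1}$, where $[P,\jap{r}^{s}]$ is first order with coefficients of order $\jap{r}^{s-1}$. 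If that commutator bookkeeping becomes awkward, a robust alternative is to argue term by term from Proposition \ref{prop:2}(1): the $\tilde J$-terms have compact $r$-support and are immediately compact when composed with the resolvents, while the remaining terms carry coefficient decay $\jap{r}^{-1-\eta}$ which, combined with elliptic regularity of the resolvents, yields compactness via cut-off plus Rellich.
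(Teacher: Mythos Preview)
Your proposal is correct and matches the paper's proof closely: both apply the Helffer--Sj\"ostrand formula to reduce to the integrand $(z-P)^{-1}(PJ-JP_f)(z-P_f)^{-1}$, factor off the compact piece $\jap{r}^{-s}(z-P_f)^{-1}$ via local compactness of $P_f$, and use Proposition~\ref{prop:2}(2) together with the commutator $[P,\jap{r}^{s}]$ (exactly the ``main obstacle'' you flagged) to handle the remaining factor. The only cosmetic difference is that the paper inserts $\jap{r}^{-s}$ on the \emph{left} of $(z-P)^{-1}$ first, writing $(z-P)^{-1}=\jap{r}^{-s}(z-P)^{-1}\jap{r}^{s}-\jap{r}^{-s}(z-P)^{-1}[P,\jap{r}^{s}](z-P)^{-1}$, so that Proposition~\ref{prop:2}(2) applies directly to the main term and the commutator appears as a separate remainder---this is precisely the bookkeeping you anticipated, and your integrability discussion makes explicit what the paper leaves implicit.
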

	\begin{proof}
		From the Helffer-Sj\"ostrand formula, we see
		\begin{align*}
		& \chi(P)J-J\chi(P_f)\\
		& =(2i\pi)^{-1}\int_{\co}\partial_{\bar{z}}\tilde \chi(z)\{(z-P)^{-1}J-J(z-P_f)^{-1}\}\,dz\wedge d{\bar{z}}\\
		& =-(2i\pi)^{-1}\int_{\co}\partial_{\bar{z}}\tilde \chi(z)(z-P)^{-1}(PJ-JP_f)(z-P_f)^{-1}\,dz\wedge d{\bar{z}}
		\end{align*}
		From Proposition \ref{prop:2}, it is sufficient to prove the following claim. 
		\begin{itemize}
			\item{Let $T$ be an operator from $\mathcal{H}_f$ to $\mathcal{H}$. We assume $(z-P)^{-1}\jap{r}^s T \jap{r}^s$ is bounded for some $1/2<s<1$ and any $z\in \co$. Then one can see that $(z-P)^{-1}T(z-P_f)^{-1}$ is a compact operator.
			}
		\end{itemize}
		We can calculate as following,
		\begin{align*}
		& (z-P)^{-1}T(z-P_f)^{-1}\\
		& =\jap{r}^{-s}(z-P)^{-1}\jap{r}^s T \jap{r}^s \times \jap{r}^{-s}(z-P_f)^{-1}\\
		& +\jap{r}^{-s}(z-P)^{-1}[P,\jap{r}^{s}](z-P)^{-1} T \jap{r}^s\jap{r}^{-s}(z-P_f)^{-1}.
		\end{align*}
		From the assumption for the claim and the locally compactness of $P_f$, first term is compact. Since $s<1$ implies $[P,\jap{r}^{s}]$ is $P$-bounded and the fact $(z-P)^{-1} T \jap{r}^s$ is bounded can be proved similarly to the proof of Proposition 7, second term is also compact and we conclude the proof of the claim.
		
		We apply this claim with $T=(PJ-JP_f)$ to obtain Lemma \ref{lem:3}.
	\end{proof}
	\begin{lem}\label{lem:4}
		Let $\chi\in C^\infty_c(\re)$ then $\jap{r}^s\chi(P)\jap{r}^{-s}$ is bounded operator on $\mathcal{H}$ for any $1/2<s<1$.
	\end{lem}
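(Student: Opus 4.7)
The plan is to reduce to controlling a single commutator. One writes
$$
\jap{r}^s\chi(P)\jap{r}^{-s} = \chi(P) + [\jap{r}^s,\chi(P)]\jap{r}^{-s};
$$
since $\chi(P)$ is bounded, it suffices to show the second term is bounded. To analyze it, I would use the Helffer--Sj\"ostrand formula exactly as in the proof of Lemma \ref{lem:3}, combined with the identity $[\jap{r}^s,(z-P)^{-1}] = (z-P)^{-1}[\jap{r}^s,P](z-P)^{-1}$, to obtain
\begin{align*}
[\jap{r}^s,\chi(P)]\jap{r}^{-s} = \frac{1}{2i\pi}\int_{\co}\partial_{\bar z}\tilde\chi(z)\,(z-P)^{-1}[\jap{r}^s,P](z-P)^{-1}\jap{r}^{-s}\,dz\wedge d\bar z.
\end{align*}

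The key observation is that when $s<1$ the commutator $[\jap{r}^s,P]$ is $P$-bounded. Since $G$, $V$, and the tensor coefficients $a_1,a_2,a_3$ in the matrix expression of $P$ on $\tilde M_\infty$ are all multiplicative and hence commute with $\jap{r}^s$, only the commutators of $\jap{r}^s$ with the $\pa_r$'s contribute. These produce coefficients of the form $\pa_r\jap{r}^s = O(\jap{r}^{s-1})$ and $\pa_r^2\jap{r}^s = O(\jap{r}^{s-2})$, both uniformly bounded because $s<1$. Thus $[\jap{r}^s,P]$ is a first-order differential operator with bounded coefficients, and ellipticity of $P$ then yields that $[\jap{r}^s,P](P+i)^{-1}$ is bounded.

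To conclude, I would write $(P+i)(z-P)^{-1} = -I + (z+i)(z-P)^{-1}$ and use $\|(z-P)^{-1}\|\le |\Im z|^{-1}$ to bound the integrand in operator norm by $O((1+|z|)/|\Im z|^2)$. On $\supp\tilde\chi$ the factor $|z|$ stays bounded, leaving $O(|\Im z|^{-2})$. Combined with the almost-analytic decay $|\partial_{\bar z}\tilde\chi(z)| = O(|\Im z|^N)$ for arbitrarily large $N$, this makes the integral absolutely convergent, and the lemma follows. The main work I anticipate is the explicit commutator computation needed to verify $P$-boundedness of $[\jap{r}^s,P]$, but this should be routine once one observes that only the $\pa_r$-commutators survive, so no real new obstacle is expected beyond the techniques already used in Proposition \ref{prop:2} and Lemma \ref{lem:3}.
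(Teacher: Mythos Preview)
Your proposal is correct and follows essentially the same route as the paper. The only cosmetic difference is that you first peel off the bounded piece $\chi(P)$ and then apply the Helffer--Sj\"ostrand formula to the commutator $[\jap{r}^s,\chi(P)]\jap{r}^{-s}$, whereas the paper applies the formula directly to $\jap{r}^s\chi(P)\jap{r}^{-s}$ and performs the same decomposition inside the integral; in both cases the heart of the argument is the $P$-boundedness of $[P,\jap{r}^s]$ for $s\le 1$ combined with the $O(|\Im z|^N)$ decay of $\partial_{\bar z}\tilde\chi$.
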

	\begin{proof}
		From the Helffer-Sj\"ostrand formula, we learn
		\begin{align*}
		& \jap{r}^s\chi(P)\jap{r}^{-s}\\
		& =(2i\pi)^{-1}\int_{\co}\partial_{\bar{z}}\tilde \chi(z)\jap{r}^s(z-P)^{-1}\jap{r}^{-s}\,dz\wedge d{\bar{z}}.
		\end{align*}
		Thus it is sufficient to prove the integrability of $\partial_{\bar{z}}\tilde \chi(z)\jap{r}^s(z-P)^{-1}\jap{r}^{-s}$.	We may assume that Im$z<1$ since the integrability on Im$z>1$ is obvious.
		
		 From the direct computations, we obtain
		\begin{align*}
			& \jap{r}^s(z-P)^{-1}\jap{r}^{-s}\\
			& = (z-P)^{-1} + (z-P)^{-1}[P,\jap{r}^s](z-P)^{-1}\jap{r}^{-s}\\
			& = (z-P)^{-1}\\
			&  + (z-P)^{-1}(i-P) \times (i-P)^{-1}[P,\jap{r}^s] \times (z-P)^{-1}\jap{r}^{-s}.\\
		\end{align*}
		From the fact that the order of $\norm{(z-P)^{-1}(i-P)}$ and $\norm{(z-P)^{-1}}$ is $(\mathrm{Im}z)^{-1}$, 
		and that $[P,\jap{r}^s]$ is $P$-bounded for $s\leq1$, $\partial_{\bar{z}}\tilde \chi(z)\jap{r}^s(z-P)^{-1}\jap{r}^{-s}$ is integrable on Im$z<1$ as $\tilde \chi(z)$ is the almost analytic extension of $\chi \in C^\infty_c(\re)$.
	\end{proof}
	\begin{proof}[Proof of Theorem \ref{thm:1}]
		We only have to prove the existence of the wave operator for $\f \in \mathcal{H}_f$ such that there exists 	$I \Subset \re\setminus (\s_{pp}(P) \bigcup \s_{pp}(P_f) \bigcup \mathrm{Cv}(\tilde V))$ which satisfies $ \f =\chi_I(P_f)\f$. That is because $\s_{pp}(P) \bigcup \s_{pp}(P_f) \bigcup \mathrm{Cv}(\tilde V)$ is discrete from assumption and Theorem 2, and hence such $\f$ are dense in $H_{ac}(P_f)$.
		
		For such $\f$, we obtain
		\begin{align*}
		& e^{itP} J e^{-itP_f}P_{ac}(P_f)\f\\
		& =e^{itP} \chi (P) J \chi (P_f)  e^{-itP_f}P_{ac}(P_f)\f + e^{itP}(J \chi (P_f)-\chi(P)J)e^{-itP_f}P_{ac}(P_f) \f\\
		\end{align*}
		From Lemma \ref{lem:3}, $(J \chi (P_f)-\chi(P)J)$ is compact. Thus we obtain 
		\begin{align*}
		\lim_{t\to\pm \infty}\norm{e^{itP}(J \chi (P_f)-\chi(P)J)e^{-itP_f}P_{ac}(P_f)\f}=0. 
		\end{align*}
		
		Then it is sufficient to prove $\lim_{t\to\pm \infty}e^{itP} \chi (P) J \chi (P_f)  e^{-itP_f}P_{ac}(P_f)\f$ exist.
		
		By differentiating $e^{itP} \chi (P) J \chi (P_f)  e^{-itP_f}P_{ac}(P_f)$, we obtain the following,
		\begin{align*}
		& \dfrac{d}{dt}\{e^{itP} \chi (P) J \chi (P_f)  e^{-itP_f}P_{ac}(P_f)\f\}\\
		& =ie^{itP} \chi (P) (PJ-JP_f) \chi (P_f)  e^{-itP_f}P_{ac}(P_f)\f\\
		& =ie^{itP} A^*_1 A_2 e^{-itP_f}P_{ac}(P_f)\f,
		\end{align*}
		where 
		\begin{align*}
		& A_1= \jap{r}^s(P_fJ^*-J^*P)\chi(P)\\
		& A_2=\jap{r}^{-s}\chi(P_f).
		\end{align*}
		
		From direct computations, we obtain
		\begin{align*}
		& A_1(P-\l \pm i \e)^{-1}A^*_1\\
		& = \jap{r}^s(P_fJ^*-J^*P)\chi(P) (P-\l \pm i \e)^{-1} \chi (P) (PJ-JP_f)\jap{r}^s\\
		& = \jap{r}^s(P_fJ^*-J^*P)\chi(P)\jap{r}^s\\
		& \times \jap{r}^{-s} (P-\l \pm i \e)^{-1}\jap{r}^{-s}\\
		& \times \jap{r}^s \chi (P) (PJ-JP_f)\jap{r}^s.
		\end{align*}
		Proposition \ref{prop:2} implies first and third part are bounded. Theorem \ref{thm:2} yields second part is bounded uniformly for $\e >0$. Thus $A_1$ is relatively $P$-smooth from Kato's characterization of relatively smoothness, especially
		\[
		\frac1{2\pi} \sup_{\norm{u}=1,u\in\dom(P)}\int^{\infty}_{-\infty}	\norm{A_1e^{-itP}u}^2 \mathrm{d}t < \infty.
		\]
		The fact that $A_2$ is $P_f$-smooth is proved directly from Theorem \ref{thm:2}.
		
		Let $\phi\in L^2(M)$, then we obtain 
		\begin{align*}
		& \bigabs{\jap{\phi,e^{itP} \chi (P) J \chi (P_f)  e^{-itP_f}P_{ac}(P_f)\f} }\\
		& =\bigabs{  -i\int^t_0\jap{\phi,e^{itP} \chi (P) (PJ-JP_f) \chi (P_f)  e^{-itP_f}P_{ac}(P_f)\f} dt\\
		& + \jap{\phi,\chi (P) J \chi (P_f) P_{ac}(P_f)\f} }\\
		& \leq \int^\infty_{-\infty}\bigabs{ \jap{\phi,e^{itP} \chi (P) (PJ-JP_f) \chi (P_f)  e^{-itP_f}P_{ac}(P_f)\f}} dt \\
		& +\bigabs{\jap{\phi,\chi (P) J \chi (P_f) P_{ac}(P_f)\f}}\\
		& = \int^\infty_{-\infty}\bigabs{ \jap{A_1e^{-itP}\phi, A_2 e^{-itP_f}P_{ac}(P_f)\f}} dt +\bigabs{\jap{\phi,\chi (P) J \chi (P_f) P_{ac}(P_f)\f}}\\
		& \leq  \int^\infty_{-\infty}\norm{A_1e^{-itP}\phi}\norm{ A_2 e^{-itP_f}P_{ac}(P_f)\f}dt +\bigabs{\jap{\phi,\chi (P) J \chi (P_f) P_{ac}(P_f)\f}}\\
		& \leq  (\int^\infty_{-\infty}\norm{A_1e^{-itP}\phi}^2dt )^{\frac 12}(\int^\infty_{-\infty}\norm{ A_2 e^{-itP_f}P_{ac}(P_f)\f}^2dt)^{\frac 12}\\
		& +\bigabs{\jap{\phi,\chi (P) J \chi (P_f) P_{ac}(P_f)\f}}.
		\end{align*}
		Thus $\lim_{t \to \pm \infty}e^{itP} \chi (P) J \chi (P_f)  e^{-itP_f}P_{ac}(P_f)\f$ exist and hence $W_\pm$ exist.
		
		We can prove the existence of $\tilde W_\pm=\slim_{t\to\pm \infty} e^{itP_f} J^* e^{-itP}P_{ac}(P)$ as follows:
		\begin{itemize}
			\item{Concerning $J^*$, counterpart of Proposition \ref{prop:2} is proved similarly by exchanging $P$ by $P_f$ and $J$ replaced with $J^*$. }
			\item{Then Lemma \ref{lem:3} is also proved similarly.}
			\item{Lemma \ref{lem:4} is directly proved by $P$ replaced by $P_f$.}
			\item{It is sufficient to prove $\lim_{t\to\pm \infty}e^{itP_f} \chi (P_f) J^* \chi (P)  e^{-itP}P_{ac}(P)\f$ exist to prove $\tilde W$ exist,
				 where $\f\in\mathcal{H_f}$ is such that $ \f =\chi_I(P_f)\f$ with $I \Subset \re\setminus (\s_{pp}(P) \bigcup \s_{pp}(P_f) \bigcup \mathrm{Cv}(\tilde V))$. }
			\item{Relatively smoothness argument in the proof of Theorem \ref{thm:1} can be recovered by replacing $A_1$ and $A_2$ by  $A^*_2$ and $A^*_1$. }
		\end{itemize}
	\end{proof}

\end{document}